  \def\Cref#1{<#1>}%
  \def\({}
  \def\){}
\newtheorem{proposition}{Proposition}[section]
\newtheorem{lemma}{Lemma}[section]
\newtheorem{remark}{Remark}[section]
\newtheorem{theorem}{Theorem}[section]
\newtheorem{corollary}{Corollary}[section]
\newtheorem{definition}{Definition}[section]
\newtheorem{conjecture}{Conjecture}[section]
\newtheorem{observation}{Observation}[section]
\newtheorem{claim}{Claim}[section]
\newtheorem{question}{Question}
\newcommand{\OPT}{\ensuremath{\mathtt{OPT}}\xspace}
\newcommand{\fmax}{\ensuremath{f_{\max}}\xspace}
\newcommand{\indegree}{\ensuremath{d^{\text{in}}}\xspace}
\newcommand{\matroidrank}{\ensuremath{\texttt{rank}}\xspace}
\newcommand{\shubhang}[1]{{\color{blue}[{\tiny Shubhang: \bf #1}]\marginpar{\color{blue}*}}}
\newcommand{\knote}[1]{{\color{red}[{\tiny Karthik: \bf #1}]\marginpar{\color{red}*}}}
\newcommand{\dsg}{\textsc{DSG}\xspace}
\newcommand{\dss}{\textsc{DSS}\xspace}
\newcommand{\mfvs}{\textsc{MatroidFVS}\xspace}
\newcommand{\densitydeletionset}{\textsc{GraphDD}\xspace}
\newcommand{\rhodensitydeletionset}{\ensuremath{\rho\text{-}}\densitydeletionset\xspace}
\newcommand{\dds}{\ensuremath{\textsc{GraphDD}}\xspace}
\newcommand{\rhodds}[1]{\ensuremath{#1\text{-}\dds}\xspace}
\newcommand{\pfds}{\ensuremath{\text{PFDS}}\xspace}
\newcommand{\FVS}{\textsc{FVS}\xspace}
\newcommand{\fvs}{\ensuremath{\text{FVS}}\xspace}
\newcommand{\supmoddensitydeletionset}{\textsc{SupmodDD}\xspace}
\newcommand{\sdds}{\ensuremath{\textsc{SupmodDD}}\xspace}
\newcommand{\rhosdds}[1]{\ensuremath{#1\text{-}\sdds}}
\newcommand{\setcover}{\textsc{SetCover}\xspace}
\newcommand{\submodcover}{\textsc{SubmodCover}\xspace}
\title{On Deleting Vertices to Reduce Density in Graphs and Supermodular Functions\thanks{Grainger College of Engineering, University of Illinois, Urbana-Champaign, Urbana, IL 61801, Email: \{karthe, chekuri, smkulka2\}@illinois.edu. Supported in part by NSF grant CCF-2402667.} }
\author{Karthekeyan Chandrasekaran
\and Chandra Chekuri
\and Shubhang Kulkarni
}
\date{}
\begin{document}
\maketitle
\begin{abstract}
  We consider deletion problems in graphs and supermodular functions where the goal is to reduce density. 
  In Graph Density Deletion (\dds), we are given a graph $G=(V,E)$ with non-negative vertex costs and a non-negative parameter $\rho \ge 0$ and the goal is to remove a minimum cost subset $S$ of vertices such that the densest subgraph in $G-S$ has density at most $\rho$. This problem has an underlying matroidal structure and generalizes several classical problems such as vertex cover, feedback vertex set, and pseudoforest deletion set for appropriately chosen $\rho \le 1$ and all of these classical problems admit a $2$-approximation. In sharp contrast, we prove that for every fixed integer $\rho > 1$, \dds is hard to approximate to within a logarithmic factor via a reduction from \setcover, thus showing a phase transition phenomenon. Next, we investigate a generalization of \dds to monotone supermodular functions, termed Supermodular Density Deletion (\sdds). In \sdds, we are given a monotone supermodular function $f:2^V \rightarrow \mathbb{Z}_{\ge 0}$ via an evaluation oracle with element costs and a non-negative integer $\rho \ge 0$ and the goal is remove a minimum cost subset $S \subseteq V$ such that the densest subset according to $f$ in $V-S$ has density at most $\rho$. We show that \sdds is approximation equivalent to the well-known {\sc Submodular Cover} problem; this implies a tight logarithmic approximation and hardness for \sdds; it also implies a logarithmic approximation for \dds, thus matching our inapproximability bound. Motivated by these hardness results, we design bicriteria approximation algorithms for both \dds and \sdds. 
  %For \dds we use an orientation LP to find a set $S$ of weight $(1+O(1/\eps))\OPT$ such that the densest subgraph in $G-S$ has density at most $(1+\eps)\rho$. For \sdds we obtain a combinatorial randomized algorithm whose bicriteria approximation bounds depend on a parameter related to $f$ that was introduced recently in \cite{cqt-22}.
\end{abstract}

%\tableofcontents
% \newpage
\setcounter{page}{1}
\section{Introduction}\label{sec:introduction}
\label{sec:intro}
The \emph{densest subgraph} problem in graphs (\dsg) is a core primitive in graph and network mining applications. In \dsg, we are given a graph $G=(V,E)$ and the goal is to find $\lambda_G^*:=\max_{S\subseteq V}|E(S)|/|S|$, where $E(S)$ is the set of edges with both end vertices in $S$. 
\dsg is not only interesting for its applications but is a fundamental problem in algorithms and combinatorial optimization with several connections to graph theory, matroids, and submodularity. Many recent works have explored various aspects of \dsg and related problems from both theoretical and practical perspectives \cite{densest-subgraph-survey,boob-20,sw-20,vbk-21,cqt-22,hqc-22,dhulipala2022differential,dinitz2023improved,adaptive-orientations,ne-24}. A useful feature of \dsg is its polynomial-time solvability. This was first seen via a reduction to network flow \cite{g-84,Picard-Queyranne} but another way to see it is by considering a more general problem, namely the \emph{densest supermodular subset} problem (\dss): Given a supermodular function $f:2^V\rightarrow \R_{\ge 0}$ via evaluation oracle, the goal is to find $\lambda^*_f:=\max_{S \subseteq V} {f(S)}/{|S|}$.
One can easily see that \dsg is a special case of \dss by noting that for any graph $G$, the function $f:2^V \rightarrow \mathbb{Z}$ defined by $f(S) = |E(S)|$ for every $S\subseteq V$ is a supermodular function. It is well-known and easy to see that \dss and \dsg can be solved in polynomial-time by a simple reduction to submodular function minimization. Several other problems that are studied in graph and network mining can be seen as special cases of \dss. Recent work has demonstrated the utility of the supermodularity lens in understanding greedy heuristics and approximation algorithms for \dsg and these problems \cite{cqt-22,vbk-21,hqc-22,hqc-23}.

\paragraph{Density Deletion Problems.} In this work we consider several interrelated \emph{vertex deletion} problems that aim to \emph{reduce} the density. 
We start with the graph density deletion problem. %For a fixed constant $\rho$, we consider \rhodds{\rho} as defined below. 
\begin{definition}[\rhodds{\rho}] For a fixed constant $\rho$, the $\rho$-graph density deletion problem, denoted \rhodds{\rho}, is defined as follows:
\begin{mdframed}
\begin{tabular}{ l l }
 \emph{\textbf{Input}:} &  Graph $G=(V, E)$ and vertex costs $c:V\rightarrow\R_{\ge 0}$ \vspace{1mm}\\  
 \emph{\textbf{Goal}:} & $\arg \min\{\sum_{u\in S}c_S : S\subseteq V \text{ and } \lambda^*_{G-S} \le \rho\}$. 
\end{tabular}
\end{mdframed}
\end{definition}
This deletion problem naturally generalizes to supermodular functions as defined below. We recall that a set function $f: 2^V\rightarrow \R$ is (i) submodular if $f(A)+f(B)\ge f(A\cap B) + f(A\cup B)$ for every $A, B\subseteq V$, (ii) supermodular if $-f$ is submodular, (iii) non-decreasing if $f(A)\le f(B)$ for every $A\subseteq B\subseteq V$, and (iv) normalized if $f(\emptyset)=0$. We observe that non-negative normalized supermodular functions are non-decreasing. For a function $f: 2^V\rightarrow \R$ and $S\subseteq V$, we define $f_{V-S}$ as the function $f$ restricted to the ground set $V-S$. The evaluation oracle for the function takes a subset $S\subseteq V$ as input and returns the function value of the set $S$. %For a fixed constant $\rho$, we define \rhosdds{\rho} below. 
\begin{definition}[\rhosdds{\rho}] For a fixed constant $\rho$, the $\rho$-supermodular density deletion problem, denoted \rhosdds{\rho}, is defined as follows:
    \begin{mdframed}
\begin{tabular}{ l l }
 \emph{\textbf{Input}:} &  Integer-valued normalized supermodular function $f:2^V\rightarrow\Z_{\ge 0}$ via evaluation oracle and \\
 & element costs $c:V\rightarrow\R_{\ge 0}$ \vspace{1mm}\\  
  % & \\
 \emph{\textbf{Goal}:} & $\arg \min\{\sum_{u\in S}c_u : S\subseteq V \text{ and } \lambda^*_{f_{V-S}} \le \rho\}$. 
\end{tabular}
\end{mdframed}
\end{definition}
When the density threshold $\rho$ is part of input, we use \dds and \sdds to refer to these problems. It is easy to see that \dds (and hence \sdds) is NP-Hard from a general result on vertex deletion problems \cite{Lewis_Yannakakis_1980}. Our goal is to understand the approximability of these problems.

\paragraph{Motivations and Connections.}
While the deletion problems are natural in their formulation, to the best of our knowledge, \dds has only recently been explicitly defined and explored. Bazgan, Nichterlein and Vazquez Alferez \cite{bazgan_et_alSWAT2024} defined and studied this problem from an FPT perspective.  As pointed out in their work, given the importance of DSG and DSS in various applications to detect communities and sub-groups of interest, it is useful to consider the robustness (or sensitivity) of the densest subgraph to the removal of vertices. In this context, we mention the classical work of Cunningham on the attack problem \cite{Cunningham85} which
can be seen as the problem of deleting \emph{edges} to reduce density;
this edge deletion problem can be solved in polynomial time for integer parameters $\rho$ via matroidal and network flow techniques. In addition to their naturalness and the recent work, we are motivated to consider \dds and \sdds owing to their connections to several classical vertex deletion problems as well as a matroidal structure underlying \dds that we articulate next.

We observe that \rhodds{0} is equivalent to the vertex cover problem: requiring density of $0$ after deleting $S$ is equivalent to $S$ being a vertex cover of $G$.  One can also see, in a similar fashion, that \rhodds{1} is equivalent to the pseudoforest deletion set problem, denoted \pfds---where the goal is to delete vertices so that every connected component in the remaining graph has at most one cycle, and \rhodds{(1 - 1/|V|)} is equivalent to the feedback vertex set problem, denoted \fvs---where the goal is to delete vertices so that the remaining graph is acyclic. Vertex cover, \pfds, and \FVS admit $2$-approximations, and moreover this bound cannot be improved under the Unique Games Conjecture (UGC) \cite{KhotR08}. We note that while $2$-approximations for vertex cover are relatively easy, $2$-approximations for \fvs and \pfds are non-obvious \cite{Bafna-Berman-Fujito95,BG96,CHUDAK1998111}. 
Until very recently there was no polynomial-time solvable linear program (LP) that yielded a $2$-approximation for \fvs and \pfds.  
In fact, the new and recent LP formulations \cite{chandrasekaran2024polyhedralaspectsfeedbackvertex} for \fvs and \pfds are obtained via  connections to Charikar's LP-relaxation for \dsg \cite{charikar_greedy_2000}.  Fujito \cite{Fujito-matroid-fvs} unified the $2$-approximations for vertex cover, \fvs, and \pfds via primal-dual algorithms %for vertex cover, \fvs, and \pfds 
by considering a more general class of \emph{matroidal} vertex deletion problem on graphs that is relevant to our work. This abstract problem, denoted \mfvs\footnote{We use the 
\emph{feedback vertex set} terminology in our naming of the \mfvs problem since the goal is to pick a min-cost subset of vertices to cover all \emph{circuits} of the matroid defined on the edges of a graph. This generalizes \fvs which is \mfvs where the matroid of interest is the graphic matroid on the input graph.}, 
is defined below. 
\begin{definition}[\mfvs] The Matroid Feedback Vertex Set problem, denoted \mfvs, is defined as follows:
    \begin{mdframed}
\begin{tabular}{ l l }
 \emph{\textbf{Input}:} &  Graph $G=(V, E)$, vertex costs $c:V\rightarrow\R_{\ge 0}$, and  \\ 
 & Matroid $\calM = (E, \calI)$ with $\calI$ being the collection of independent sets \\
 & \quad \quad (via an independence testing oracle)
 \vspace{1mm}\\  
  % & \\
 \emph{\textbf{Goal}:} & $\arg \min\{\sum_{u\in S}c_u : S\subseteq V \text{ and } E[V-S] \in \calI\}$. 
\end{tabular}
\end{mdframed}
\end{definition}
Fujito \cite{Fujito-matroid-fvs} obtained a $2$-approximation for \mfvs for the class of ``uniformly sparse'' matroids \cite{Lee_Streinu_2008}. 
It is not difficult to show that vertex cover, \fvs, and \pfds can  be cast as special cases of \mfvs 
where the associated matroids are ``uniformly sparse''.  Consequently, Fujito's result unifies the $2$-approximations for these three fundamental problems. 

We now observe some non-trivial connections between $\rho$-\dds, \mfvs and \rhosdds{\rho}. We can show that $\rho$-\dds is a special case of \mfvs for every integer $\rho$: indeed, $\rho$-\dds corresponds to \mfvs where the matroid $\calM_\rho$ is the $\rho$-fold union of the $1$-cycle matroid defined on the edge set of the input graph (see \Cref{thm:dds-to-matroidfvs} in \Cref{appendix:sec:reductions}). Although it is not obvious, we can show that \mfvs is a special case of \rhosdds{1} (see \Cref{thm:matroidfvs-to-supmodDD} in \Cref{appendix:sec:reductions}). 
%a subset of edges is independent if and only if each component induced by the subset has at most one cycle
We refer the reader to the problems in the right column in Figure \ref{fig:reductions}(b) for a pictorial representation of the reductions discussed so far. 
Given these connections and the existence of a $2$-approximation for vertex cover, \fvs, and \pfds, we are led to the following questions. 

\begin{question}\label{question:main}
  What is the approximability of $\rho$-\dds, \mfvs, and $\rho$-\sdds? Do these admit constant factor approximations?
\end{question}

\subsection{Results} 
In this section, we give an overview of our technical results that resolve Question~\ref{question:main} up to a constant factor gap.
\subsubsection{Connections between \submodcover and \sdds}
We obtain a logarithmic approximation for \rhodds{\rho}, \mfvs, and \rhosdds{\rho} via a reduction to the
submodular cover problem and using the Greedy algorithm for it due to Wolsey \cite{Wolsey82}. 
First, we recall the submodular cover problem. 
\begin{definition}[\submodcover] The submodular cover problem, denoted \submodcover, is defined as follows:
    \begin{mdframed}
\begin{tabular}{ l l }
 \emph{\textbf{Input}:} &  Integer-valued normalized non-decreasing submodular function $h:2^V\rightarrow\Z_{\ge 0}$ \\
 & via evaluation oracle and \\ 
    & element costs $c:V\rightarrow\R_{\ge 0}$ \vspace{1mm}\\  
 \emph{\textbf{Goal}:} & $\arg \min\{\sum_{e\in F}c_e : F \subseteq V \text{ and } h(F) \ge h(V)\}$. 
\end{tabular}
\end{mdframed}
\end{definition}

For a function $f: 2^V\rightarrow \R$, we define the marginal $f(v|S):=f(S+v) - f(S)$ for every $v\in V$ and $S\subseteq V$. 
We show the following result. 

\begin{restatable}{theorem}{thmsddsToSubmodCover}\label{thm:sddsToSubmodCover}
Let $f: 2^V\rightarrow \Z_{\ge 0}$ be an integer-valued normalized supermodular function and $\rho$ be a rational number. Then, there exists a normalized non-decreasing submodular function $h:2^V\rightarrow\R_{\geq 0}$ such that 
\begin{enumerate}
    \item if $\rho$ is an integer, then $h$ is integer-valued, 
    % \knote{added this. Edit the proof accordingly.}
    \item for $F\subseteq V$, we have that $\lambda^*_{f|_{V-F}} \leq \rho$ if and only if $h(F)\ge h(V)$, 
    \item $h(v) \leq \max\{0, f(v|V - v) - \rho\}$ for all $v \in V$, and 
    \item evaluation queries for the function $h$ can be answered in polynomial time by making polynomial number of  evaluation queries to the function $f$. 
\end{enumerate}
\end{restatable}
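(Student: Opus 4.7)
The plan is to construct $h$ directly from the residual function $g(S) := f(S) - \rho|S|$, which is supermodular since $f$ is supermodular and $|\cdot|$ is modular. The key observation is that $\lambda^*_{f|_{V-F}} \le \rho$ if and only if $g(S) \le 0$ for every $S \subseteq V-F$, and since $g(\emptyset)=0$, this holds if and only if $\max_{S \subseteq V-F} g(S) = 0$. Let $g^* := \max_{S\subseteq V} g(S) \ge 0$ and define
\[
\phi(F) \;:=\; \max_{S \subseteq V - F} g(S), \qquad h(F) \;:=\; g^* - \phi(F).
\]
Then $h(\emptyset)=0$, $h(V)=g^*$, and $h(F) \ge h(V)$ iff $\phi(F) \le 0$ iff $\phi(F) = 0$ iff $\lambda^*_{f|_{V-F}} \le \rho$, establishing property 2. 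Property 1 is immediate since $g$ is integer-valued whenever $\rho$ is. Property 4 follows because a single evaluation of $h(F)$ reduces to maximizing the supermodular function $g$ over $V-F$ (equivalently, submodular function minimization on $-g$ restricted to $V-F$), which can be done in polynomial time using only polynomially many queries to $f$.

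The main structural step is to show that $h$ is non-decreasing and submodular. Monotonicity is immediate: $\phi$ is non-increasing in $F$ since enlarging $F$ shrinks the feasible region $V-F$. For submodularity, I will first prove that $\psi(T) := \max_{S \subseteq T} g(S)$ is supermodular in $T$. Given maximizers $S_1 \subseteq T_1$ and $S_2 \subseteq T_2$, supermodularity of $g$ yields $g(S_1) + g(S_2) \le g(S_1 \cup S_2) + g(S_1 \cap S_2)$; since $S_1 \cup S_2 \subseteq T_1 \cup T_2$ and $S_1 \cap S_2 \subseteq T_1 \cap T_2$, the right side is at most $\psi(T_1 \cup T_2) + \psi(T_1 \cap T_2)$, giving the supermodular inequality $\psi(T_1)+\psi(T_2) \le \psi(T_1\cup T_2)+\psi(T_1\cap T_2)$. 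Then $\phi(F) = \psi(V-F)$ is supermodular in $F$ (complementation preserves supermodularity on the Boolean lattice), so $h = g^* - \phi$ is submodular and normalized.

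Finally, property 3 requires the per-element bound $h(\{v\}) \le \max\{0, f(v|V-v) - \rho\}$. Since $h(\{v\}) \ge h(\emptyset) = 0$, the substantive content is that $h(\{v\}) \le f(v|V-v) - \rho$ whenever this quantity is nonnegative, and $h(\{v\}) = 0$ whenever $f(v|V-v) - \rho < 0$. Both follow from the non-decreasing marginals of the supermodular function $g$: for any $A \subseteq V-v$ one has $g(A+v)-g(A) \le g(V)-g(V-v) = f(v|V-v) - \rho$. Let $S^*$ be a maximizer of $g$ on $V$. If $v \in S^*$, applying this to $A = S^*-v$ gives $g(S^*-v) \ge g^* - (f(v|V-v)-\rho)$, so $\phi(\{v\}) \ge g^* - (f(v|V-v)-\rho)$ and hence $h(\{v\}) \le f(v|V-v) - \rho$. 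Furthermore, if $f(v|V-v) - \rho < 0$, then no maximizer of $g$ can contain $v$ (the same marginal inequality would force a strict improvement by dropping $v$), so $\phi(\{v\}) = g^*$ and $h(\{v\}) = 0$. The main obstacle I anticipate is precisely this marginal/case analysis at the single-element level; the remaining claims reduce to standard closure facts about supermodular maximization and oracle reductions to \emph{submodular function minimization}.
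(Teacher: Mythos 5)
Your proposal is correct and is, up to notation, the same construction as the paper's: your $\phi(F)=\max_{S\subseteq V-F}(f(S)-\rho|S|)$ coincides with the paper's $g(V-F)$, and your $h(F)=g^*-\phi(F)$ is exactly the paper's $h(F)=g(V)-g(V-F)$, with properties 1, 2 and 4 argued identically. You in fact supply more detail than the paper does for monotonicity and submodularity of $h$ (via the supermodularity of $T\mapsto\max_{S\subseteq T}g(S)$ and complementation), and your marginal-based argument for property 3 is an equivalent repackaging of the paper's splitting of the outer max according to whether the maximizer contains $v$; the only cosmetic gap is that you do not explicitly note that when $v\notin S^*$ one gets $\phi(\{v\})=g^*$ and hence $h(\{v\})=0$, which closes the remaining case where $f(v|V-v)-\rho\ge 0$ but $v$ is outside every maximizer.
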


We discuss the consequences of Theorem \ref{thm:sddsToSubmodCover} for \rhosdds{\rho}. 
We recall that \submodcover admits a
% $O(\log{(\max_{v} h(v))})$-approximation 
$(1 + \ln{(\max_{v} h(v))})$-approximation 
for input function $h$ via the Greedy algorithm of Wolsey \cite{Wolsey82}. 
% \cnote{Wolsey's result is precise - $(1+ \ln \max_v h(v))$ - should we cite it that way?}. 
Consider \rhosdds{\rho} for integer-valued $\rho$. 
%By Theorem \ref{thm:sddsToSubmodCover} and the additional observation that the function $h$ defined in the theorem is integer-valued
By Theorem \ref{thm:sddsToSubmodCover}, we have a reduction to \submodcover and consequently, we have a 
% $O(\log(\max_{v\in V} f(v|V-v)))$-approximation. 
$(1 + \ln(\max_{v\in V} f(v|V-v)))$-approximation. 
In particular, we note that \rhodds{\rho} for integer-valued $\rho$ and \mfvs admit $O(\log{n})$-approximation, where $n$ is the number of vertices in the input graph. 
\iffalse
We mention that if $\rho$ is a rational value, say $\rho=p/q$, then we 
%Next, consider \rhosdds{\rho} for rational $\rho=p/q$. We 
may reduce it to an instance of \rhosdds{\rho} with integer-valued $\rho$ by scaling the input function by $q$; if $\rho$ is irrational, then \rhosdds{\rho} has additional complications and cannot be simply handled by scaling. 
\fi

\begin{corollary}\label{coro:rhodds-and-mfvs}
    \rhosdds{\rho} for integer-valued $\rho$ admits an 
    % $O(\log{(\max_{v\in V}f(v|V-v))})$-approximation.
    $(1 + \ln{(\max_{v\in V}f(v|V-v))})$-approximation, where $f:V\rightarrow\Z_{\ge 0}$ is the input integer-valued, normalized supermodular function.
    Consequently, \rhodds{\rho} for integer valued $\rho$ and \mfvs admit $O(\log{n})$-approximations, where $n$ is the number of vertices in the input graph. 
\end{corollary}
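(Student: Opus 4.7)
The plan is to construct $h$ from $f$ via a sequence of natural transformations that each preserve the structural properties we need. First, I would introduce the auxiliary function $p : 2^V \rightarrow \R$ given by $p(T) := f(T) - \rho |T|$. Since $|T|$ is modular and $f$ is supermodular, $p$ is supermodular, and it is normalized with $p(\emptyset) = 0$. This function encodes feasibility: for any $F \subseteq V$, the condition $\lambda^*_{f|_{V-F}} \leq \rho$ is equivalent to $p(T) \leq 0$ for every $T \subseteq V - F$.

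Next I would lift $p$ to its upper envelope $q : 2^V \rightarrow \R_{\ge 0}$ defined by $q(S) := \max_{T \subseteq S} p(T)$. I would show that $q$ is non-decreasing and supermodular: for $A, B \subseteq V$, let $T_A \subseteq A$ and $T_B \subseteq B$ achieve $q(A)$ and $q(B)$; then supermodularity of $p$ applied to the pair $(T_A, T_B)$, together with $T_A \cap T_B \subseteq A \cap B$ and $T_A \cup T_B \subseteq A \cup B$, gives $q(A) + q(B) \leq q(A \cap B) + q(A \cup B)$. Moreover, $q(S) \geq p(\emptyset) = 0$, with $q(S) = 0$ iff every $T \subseteq S$ satisfies $p(T) \leq 0$.

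Now I would define $h(F) := q(V) - q(V - F)$. The fact that $q$ is non-decreasing immediately yields that $h$ is non-decreasing with $h(F) \in [0, q(V)]$, $h(\emptyset) = 0$, and $h(V) = q(V)$; the fact that $q$ is supermodular, applied to the complementary sets $V - A$ and $V - B$, yields submodularity of $h$. The equivalence $\lambda^*_{f|_{V-F}} \leq \rho \iff h(F) \geq h(V)$ then reduces to $q(V - F) = 0$, which is exactly the feasibility characterization from the previous paragraph. Integer-valuedness when $\rho$ is an integer is immediate because then $p$, $q$, and $h$ all take integer values. To evaluate $q(V - F) = -\min_{T \subseteq V - F}(\rho |T| - f(T))$, observe that $\rho |T| - f(T)$ is submodular since $f$ is supermodular, so this is an instance of submodular function minimization, solvable in polynomial time with polynomially many oracle calls to $f$.

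The only property that requires more than bookkeeping is the pointwise bound $h(v) \leq \max\{0, f(v | V - v) - \rho\}$. Here I would do a case split on whether $q(V)$ is attained by a subset containing $v$. If not, then $q(V) = q(V - v)$, so $h(v) = 0$. Otherwise, pick $T^* \ni v$ with $q(V) = p(T^*)$; since $T^* - v \subseteq V - v$ we have $q(V - v) \geq p(T^* - v)$, hence $h(v) \leq p(T^*) - p(T^* - v) = f(v | T^* - v) - \rho \leq f(v | V - v) - \rho$, where the last inequality uses that supermodularity of $f$ makes marginals non-decreasing in the conditioning set. I do not expect a substantive obstacle in this proof; the main care is stating the supermodular-envelope lemma cleanly and verifying that complementing preserves the relevant inequalities.
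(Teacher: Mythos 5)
Your argument is a correct and clean proof of the reduction from \rhosdds{\rho} to \submodcover (the paper's Theorem~\ref{thm:sddsToSubmodCover}), and your construction coincides with the paper's: your $q$ is their $g(X) = \max\{f(Z) - \rho|Z| : Z \subseteq X\}$ and your $h(F) = q(V) - q(V-F)$ is their $h$; your explicit verification that the upper envelope of a supermodular function is supermodular, and the case split in the pointwise bound, are welcome details the paper glosses over. However, the statement you were asked to prove is the \emph{corollary}, not the reduction theorem, and your writeup stops short of it. To get from the reduction to the stated approximation guarantees you still need three steps: (i)~invoke Wolsey's result that the greedy algorithm achieves a $(1 + \ln\max_{v} h(v))$-approximation for \submodcover on an integer-valued, normalized, non-decreasing submodular $h$; (ii)~combine this with your pointwise bound $h(v) \le \max\{0, f(v\mid V-v) - \rho\} \le f(v\mid V-v)$ to convert the factor into $1 + \ln\bigl(\max_{v\in V} f(v\mid V-v)\bigr)$; and (iii)~prove the ``consequently'' clause: for \rhodds{\rho} the relevant supermodular function is the induced-edge-count, for which $f(v\mid V-v) = \deg(v) \le n-1$, and for \mfvs one first reduces to \rhosdds{1} via Theorem~\ref{thm:matroidfvs-to-supmodDD}, whose function satisfies $f(v\mid V-v) = \matroidrank_{\calM^*}(\delta(v)) + 1 \le \deg(v) + 1$, so in both cases the factor is $O(\log n)$. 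None of (i)--(iii) appear in your proposal. The hard technical work is done and done correctly, but as a proof of Corollary~\ref{coro:rhodds-and-mfvs} it is incomplete without the final chaining.
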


\begin{remark}
The reduction from \sdds to \submodcover is in some sense implicit in prior literature (see \cite{Ueno_Kajitani_Gotoh_1988, Fujito-matroid-fvs} for certain special cases of supermodular functions).
We note that the reduction from \fvs to \submodcover which follows from this connection does not seem to be well-known in the literature, and the authors of 
this paper were not aware of it until recently.
\end{remark}

From a structural point of view we
also prove that \submodcover reduces to \rhosdds{1}, thus essentially showing the
equivalence of \submodcover and \sdds. We believe that it is useful to have this equivalence explicitly known given that vertex deletion problems arise naturally but seem different from covering problems on first glance.

\begin{restatable}{theorem}{thmSubmodCovertosdds}\label{thm:SubmodCovertosdds}
Let $h: 2^V\rightarrow \Z_{\ge 0}$ be an integer-valued normalized non-decreasing submodular function. Then, there exists a normalized supermodular function $f:2^V\rightarrow \Z_{\ge 0}$ 
% \knote{$f:2^V\rightarrow \Z_{\ge 0}$} \knote{edit proof accordingly} 
such that 
\begin{enumerate}
    \item for $F\subseteq V$, we have that $h(F)\ge h(V)$ if and only if $\lambda^*_{f|_{V-F}}\le 1$, 
    \item %$\max\{f(v|V-v): v\in V\}\le \max\{h(v): v\in V\}+1$.
    $f(v|V-v) = h(v) + 1$ for all $v \in V$, and 
    \item evaluation queries for the function $f$ can be answered in polynomial time by making a   constant number of evaluation queries to the function $h$. 
\end{enumerate}
\end{restatable}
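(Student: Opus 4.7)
The plan is to define
\[
  f(S) \;:=\; |S| + h(V) - h(V \setminus S)
  \qquad \text{for all } S \subseteq V,
\]
and verify that this $f$ meets all three requirements. The motivation is that the density condition $\lambda^*_{f|_{V-F}} \leq 1$ will translate to an additive condition like ``$f(S)/|S|$ never exceeds $1$,'' which, by subtracting off the modular $|S|$ term, becomes the requirement that $h(V \setminus S)$ equal $h(V)$ for every non-empty $S$ in the surviving ground set, i.e., a covering-type condition on $h$.

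The easy parts come first. Normalization $f(\emptyset) = 0$ is immediate, and $f$ is integer-valued and non-negative because $h$ is integer-valued and non-decreasing. For supermodularity, I write $f$ as the sum of the modular function $S \mapsto |S|$ and the ``complement'' function $g(S) := h(V) - h(V \setminus S)$; the latter is supermodular by a one-line computation that applies the submodularity inequality of $h$ to the pair $V \setminus A$, $V \setminus B$ and then uses $(V\setminus A)\cup (V\setminus B) = V\setminus (A\cap B)$ together with the analogous identity for intersections. Property 2 follows by a direct expansion: $f(v\mid V-v) = f(V) - f(V-v) = \bigl(|V| + h(V) - h(\emptyset)\bigr) - \bigl(|V|-1 + h(V) - h(v)\bigr) = h(v) + 1$, using $h(\emptyset) = 0$. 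Property 3 is immediate since one evaluation of $f(S)$ requires one evaluation of $h(V \setminus S)$ (plus $h(V)$, which can be cached).

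The main step is property 1. Let $T := V \setminus F$. Plugging in the definition of $f$,
\[
  \lambda^*_{f|_T}
  \;=\; \max_{\emptyset \neq S \subseteq T} \frac{f(S)}{|S|}
  \;=\; 1 + \max_{\emptyset \neq S \subseteq T} \frac{h(V) - h(V\setminus S)}{|S|}.
\]
Since $h$ is non-decreasing, every numerator $h(V) - h(V\setminus S)$ is non-negative, so $\lambda^*_{f|_T} \geq 1$ always, and $\lambda^*_{f|_T} \leq 1$ holds if and only if $h(V \setminus S) = h(V)$ for every non-empty $S \subseteq T$. Taking $S = T$ gives the forward direction $h(F) = h(V)$; conversely, if $h(F) \geq h(V)$ (equivalently, $h(F) = h(V)$), then for any non-empty $S \subseteq T$ we have $V \setminus S \supseteq F$, and monotonicity yields $h(V) \geq h(V\setminus S) \geq h(F) = h(V)$, so equality holds throughout.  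The obstacle I anticipate is purely notational bookkeeping around the complementation, i.e., making sure the direction of the inequality in the density condition lines up with the direction of the covering condition; the identification $T = V \setminus F$ and the observation that the modular $|S|$ term shifts the threshold from $0$ to $1$ are what make the argument go through.
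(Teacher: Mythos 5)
Your proposal is correct, and the function you construct, $f(S) = |S| + h(V) - h(V\setminus S)$, is exactly the one used in the paper. The only difference is presentational: where the paper establishes property~(1) by invoking the reverse direction of its earlier reduction (Theorem~\ref{thm:sddsToSubmodCover}), you give a short, self-contained argument via $\lambda^*_{f|_T} = 1 + \max_{\emptyset \neq S \subseteq T}(h(V)-h(V\setminus S))/|S|$ and monotonicity of $h$, which is clean and complete.
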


\subsubsection{Hardness of Approximation}
A starting point for our attempt to answer Question~\ref{question:main} was
our belief that $\rho$-\dds for integer $\rho$ admits a $(\rho+1)$-approximation
via the primal-dual approach suggested by Fujito for \mfvs \cite{Fujito-matroid-fvs}. This belief stems from Fujito's work which showed a $2$-approximation for vertex cover, \fvs, \pfds, and \mfvs for  ``uniformly sparse'' matroids and our reduction showing that $\rho$-\dds for integral $\rho$ is a special case of \mfvs (see Theorem \ref{thm:dds-to-matroidfvs}). We note that the matroid that arises in the reduction is not a ``uniformly sparse'' matroid but has lot of similarities with it, so our initial belief was that a more careful analysis would lead to a constant factor approximation. 
However, to our surprise, after several unsuccessful attempts to prove a constant factor
upper bound, we were able to show that for every integer $\rho \geq 2$,  $\rho$-\dds is $\Omega(\log n)$-hard to approximate via a reduction from Set Cover. 

\begin{restatable}{theorem}{thmDDSlognHard}\label{thm:DDS-logn-hard}
For every integer $\rho \geq 2$,  there is no $o(\log n)$ approximation for \rhodds{\rho} assuming $P\neq NP$, where $n$ is the number of vertices in the input instance.
\end{restatable}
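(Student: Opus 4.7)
My plan is to reduce from \setcover, which by Feige's theorem admits no $o(\log n)$-approximation assuming $P\neq NP$. Given a \setcover instance $(\cU, \cS)$ with $|\cU|=n$, I would construct a graph $G$ with vertex costs so that set covers of $(\cU,\cS)$ correspond to solutions of $\rho$-\dds on $G$ with optimal values preserved up to a constant factor. Transferring Set Cover's $\Omega(\log n)$-hardness along this reduction would then yield the same bound for $\rho$-\dds.

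The construction introduces a unit-cost set vertex $v_j$ for each $S_j\in\cS$ and an element gadget $H_i$ for each $e_i\in\cU$. Each $H_i$ consists of high-cost auxiliary vertices (of cost larger than $m$, say) appropriately connected to the covering set vertices $\{v_j:e_i\in S_j\}$, and is designed to enforce the following invariant: the subgraph of $G$ formed by $H_i$ together with its incident live set vertices has density $>\rho$ if no covering set vertex of $e_i$ is deleted, and density $\le\rho$ otherwise. When $d_i:=|\{j:e_i\in S_j\}|\le 2\rho+2$, the gadget $H_i$ can simply be realized as $K_{2\rho+2}$ formed by the $d_i$ covering set vertices together with $2\rho+2-d_i$ auxiliary vertices: this clique has density $(2\rho+1)/2>\rho$, and removing any single vertex leaves $K_{2\rho+1}$ of density exactly $\rho$.

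Given this construction, the analysis would run as follows. For completeness, a set cover $C$ of size $k$ directly yields the $\rho$-\dds solution $\{v_j:S_j\in C\}$ of cost $k$, since each element gadget then has at least one covering set vertex deleted and reduces to a subgraph of density $\le\rho$. For soundness, the large auxiliary costs ensure that any $\rho$-\dds solution of cost at most $m$ consists solely of set vertices; by the invariant, every element then has at least one deleted covering set, yielding a set cover of the same cardinality. Hence an $o(\log n)$-approximation for $\rho$-\dds would yield an $o(\log n)$-approximation for \setcover, contradicting Feige's theorem.

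The main obstacle I anticipate is designing the element gadget for elements with $d_i>2\rho+2$, since $K_{2\rho+2}$ is essentially the unique critical dense graph in the sense above (density $>\rho$ while any single vertex deletion drops density to $\le\rho$). To accommodate arbitrary frequencies, I would either combine multiple overlapping critical cliques so that deleting any single covering set vertex fixes all of them, or preprocess the \setcover instance to suitably bound the frequency without losing the $\Omega(\log n)$-hardness gap. Verifying that the resulting reduction still preserves optimum values up to a constant factor---so that the Set Cover hardness indeed transfers---will be the crux of the proof.
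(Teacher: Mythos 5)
Your high-level plan (reduce from Set Cover, build an element gadget whose density exceeds $\rho$ until a covering set vertex is deleted) matches the paper's strategy, but the clique gadget you propose breaks the completeness direction in a way that is not fixed by handling high frequencies.

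The flaw is that density is not ``local.'' Your analysis shows that, in isolation, each $K_{2\rho+2}$ drops to density exactly $\rho$ once one vertex is deleted. But the overall \dds instance is the \emph{union} of all the element gadgets, and they necessarily share set-vertices (otherwise deleting $v_j$ would not help every element in $S_j$). The densest subgraph of a union of subgraphs sharing vertices can far exceed the maximum density of any single subgraph. Concretely, take $\rho=2$ and two elements $e_1,e_2$ both covered by the same four sets $S_1,\dots,S_4$. Your gadgets are two copies of $K_6$ sharing the four set-vertices. After deleting one set-vertex $v_{S_1}$ (a valid set cover of size one), the remaining graph on $7$ vertices still has $17$ edges, for density $17/7>2$. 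So the solution extracted from a set cover is not feasible for $2$-\dds, and the forward (completeness) implication fails. The issue is not specific to $K_{2\rho+2}$: a vertex that participates densely in many gadgets accumulates high degree and the densest subgraph spanning several gadgets stays above $\rho$ even after per-gadget deletions. Your stated worry about $d_i>2\rho+2$ is real but secondary; the composition problem already arises at low frequency.

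The paper's construction is designed precisely to sidestep this. Each element $e$ is given a complete binary tree $\calT_e$ whose \emph{leaves} are the set-vertices of sets containing $e$, with self-loops on the root and on set-vertices to pad their degree. The shared vertices (set-vertices) thus have tree-degree only $1$ in each gadget, so they do not accumulate degree across gadgets. Feasibility of the graph obtained from a set cover is then certified not by local density bounds but by Proposition~\ref{thm:orientation-characterization}(2): an explicit global orientation of the residual graph is constructed by sending all tree edges toward the root and then reversing one root-to-deleted-leaf path per tree to push the excess indegree off the root. If you want to salvage the clique-based idea, you would need a gadget whose \emph{interface} vertices (the ones shared across gadgets) have bounded degree within each gadget and a composition argument (e.g.\ via an explicit orientation with indegree $\le\rho$) that works for the whole graph, not gadget-by-gadget; as written, the proof does not establish the needed direction.
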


Thus, $\rho$-\dds exhibits a \emph{phase transition}: it admits a $2$-approximation for $\rho\le 1$ (via Fujito's results \cite{Fujito-matroid-fvs}) and becomes $\Omega(\log{n})$-hard for every integer $\rho\ge 2$. To conclude our hardness results, we note that since \dds is a special case of  \mfvs, which itself is a special case of \sdds, both \mfvs and \sdds are $\Omega(\log n)$-inapproximable. However, both these problems are also $O(\log{n})$-approximable via Corollary \ref{coro:rhodds-and-mfvs}. Thus, we resolve the approximability of all these problems to within a small constant factor. We refer the reader to Figure \ref{fig:reductions} for an illustration of problems considered in this work and approximation-factor preserving reductions between them.

\begin{figure}[ht]
\centering
\includegraphics[width=0.8\textwidth, trim={2cm 1cm 4cm 2.5cm}, clip]{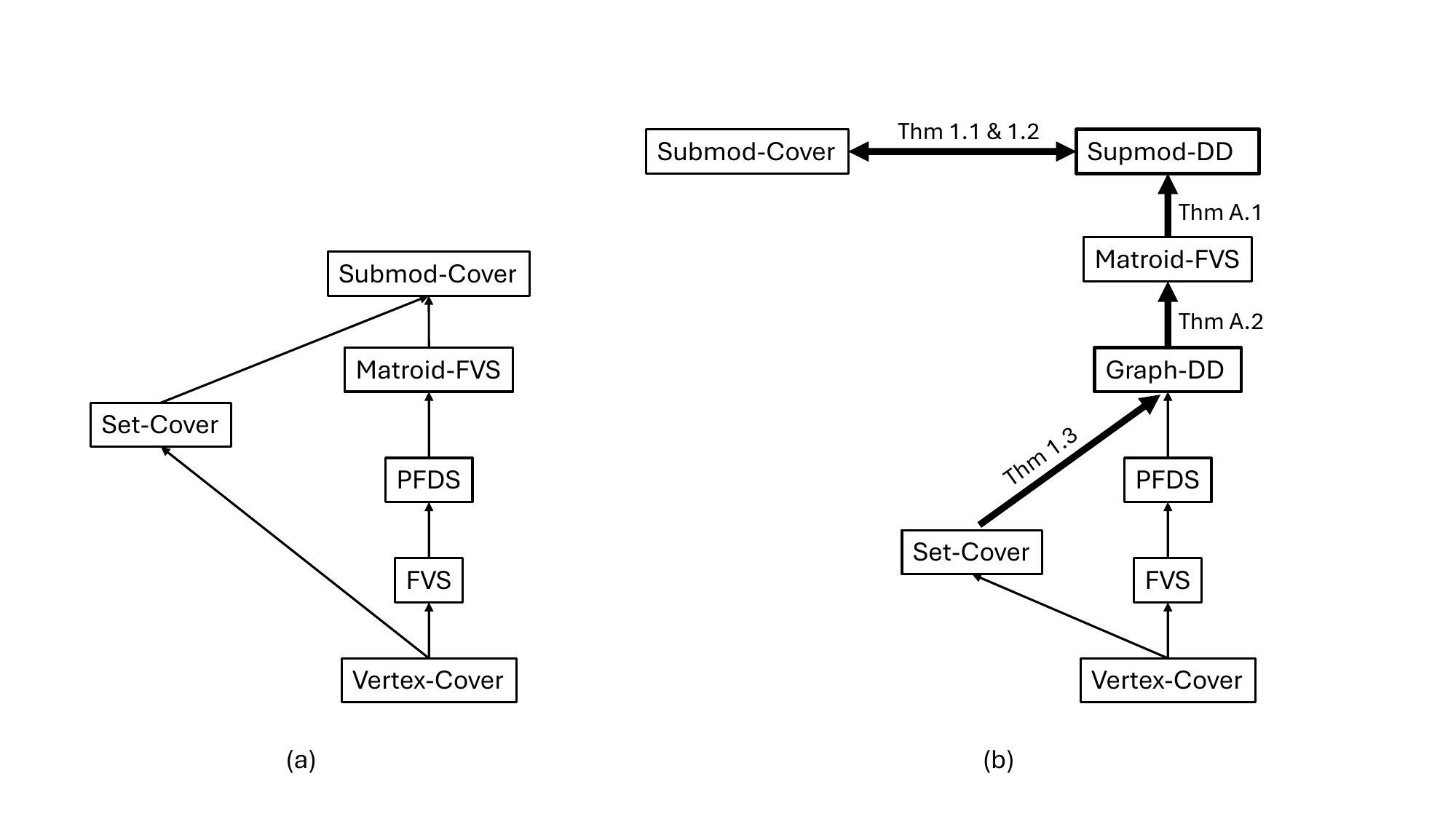}
\caption{Reductions between problems of interest to this work. Arrow from Problem A to Problem B implies that Problem A has an approximation-preserving reduction to Problem B. Figure (a) consists of the connections between problems known prior to our work. Figure (b) showcases our results.}
%The reductions from \dds to \mfvs and from \mfvs to \sdds indicated by thick lines are non-trivial observations can be found in \Cref{appendix:sec:reductions}.}
\label{fig:reductions}
\end{figure}
% \knote{Edit caption to cite appendix.}

\subsubsection{Bicriteria Approximations}
The hardness result for $2$-\dds (and $\rho$-\dds) motivates us to consider bicriteria approximation algorithms. Can we obtain constant factor approximation by relaxing the requirement of meeting the density target $\rho$ exactly? We show that this is indeed possible.
We consider an orientation based  LP that was used recently to obtain a polynomial-time solvable LP to approximate \fvs and \pfds \cite{chandrasekaran2024polyhedralaspectsfeedbackvertex}.
We observed that this LP has an $\Omega(n)$ integrality gap when considering $2$-\dds. Nevertheless, the LP is useful in obtaining the following bicriteria approximation.

\begin{restatable}{theorem}{thmbicriteriaorientation}\label{thm:orientation-LP-bicriteria}
    There exists a polynomial time algorithm which takes as input a graph $G = (V, E)$, vertex deletion costs $c:V\rightarrow\R_{\ge 0}$, a target density $\rho\in \R$, and an error parameter $\epsilon \in (0, 1/2)$, and returns a set $S\subseteq V$ such that:
    \begin{enumerate}
        \item $\lambda^*_{G - S} \leq \left(\frac{1}{1 - 2\epsilon}\right) \cdot \rho$,
        \item $\sum_{u\in S}c_u \leq \left(\frac{1}{\epsilon}\right) \cdot \mathtt{OPT}$,
    \end{enumerate}
    where \texttt{OPT} denotes the cost of an optimum solution to \rhodensitydeletionset on the instance $(G,c)$.
\end{restatable}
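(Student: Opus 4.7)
The plan is to round the natural orientation-based LP relaxation of $\rho$-\dds referenced in the preceding paragraph (from \cite{chandrasekaran2024polyhedralaspectsfeedbackvertex}). The LP has variables $x_u \in [0,1]$ for each $u \in V$ (fractional deletion) and $y_{e,u}, y_{e,v} \in [0,1]$ for each edge $e = \{u,v\} \in E$ (fractional orientation of $e$ toward its two endpoints); it minimizes $\sum_{u \in V} c_u x_u$ subject to the coverage constraint $y_{e,u} + y_{e,v} + x_u + x_v \ge 1$ for every edge $e = \{u,v\}$ (each edge is covered either by orientation or by deletion of an endpoint) and the indegree constraint $\sum_{e \ni u} y_{e,u} \le \rho$ for every vertex $u$. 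First I would verify that this is a valid relaxation for arbitrary $\rho \ge 0$: given any feasible integer deletion set $S^*$, the graph $G - S^*$ has density at most $\rho$, so by the classical equivalence between density at most $\rho$ and the existence of a fractional orientation of max indegree at most $\rho$, one can extend $x_u = \mathbf{1}[u \in S^*]$ by such an orientation to a feasible LP solution of cost $\sum_{u \in S^*} c_u$. Thus the LP optimum is at most $\OPT$, and since the LP has $O(|V| + |E|)$ variables and constraints, it is solvable in polynomial time.

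Next, given an optimal LP solution $(x^*, y^*)$, the algorithm outputs the threshold set $S := \{u \in V : x^*_u \ge \epsilon\}$. The cost guarantee is immediate: $x^*_u \ge \epsilon$ implies $x^*_u/\epsilon \ge 1$, so $\sum_{u \in S} c_u \le (1/\epsilon) \sum_{u \in V} c_u x^*_u \le \OPT/\epsilon$.

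For the density guarantee, I would argue directly from the LP constraints that every $T \subseteq V - S$ has density at most $\rho/(1 - 2\epsilon)$ in $G - S$. By definition of $S$, every $u \in T$ satisfies $x^*_u < \epsilon$. For each edge $e = \{u,v\} \in E(T)$ the coverage constraint yields $y^*_{e,u} + y^*_{e,v} \ge 1 - x^*_u - x^*_v > 1 - 2\epsilon$. Summing over $E(T)$, swapping the order of summation, and applying the indegree constraint at each $u \in T$ gives
\begin{align*}
(1 - 2\epsilon)\, |E(T)| &\le \sum_{e = \{u,v\} \in E(T)} \bigl(y^*_{e,u} + y^*_{e,v}\bigr) = \sum_{u \in T} \sum_{\substack{e \in E(T)\\ e \ni u}} y^*_{e,u} \\
&\le \sum_{u \in T} \sum_{e \ni u} y^*_{e,u} \le \rho |T|,
\end{align*}
so $|E(T)|/|T| \le \rho/(1 - 2\epsilon)$, and hence $\lambda^*_{G-S} \le \rho/(1-2\epsilon)$.

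Since the algorithm consists only of solving an LP and thresholding, the main conceptual point is confirming that the LP is a valid relaxation for arbitrary (possibly non-integer) $\rho$, which reduces to the density-orientation equivalence invoked above; everything else is a clean threshold-rounding calculation, and I do not anticipate a substantive obstacle. Notably, the resulting trade-off $(\rho/(1 - 2\epsilon),\, 1/\epsilon)$ is consistent with the $\Omega(n)$ integrality gap noted earlier, since relaxing the density target is precisely what lets us avoid paying the gap in the cost factor.
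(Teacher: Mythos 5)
Your proof is correct and takes essentially the same approach as the paper: solve the orientation LP, threshold-round at $\epsilon$, and bound cost and feasibility separately. The only cosmetic differences are that the paper's indegree constraint has an extra $\rho x_u$ term (which neither argument actually exploits at the rounding step), and the paper certifies feasibility by exhibiting the rescaled fractional orientation $z'_{e,u} := z_{e,u}/(1-2\epsilon)$ of $G-S$ and invoking the density-orientation equivalence, whereas you inline the same computation by directly bounding $|E(T)|/|T|$ for every $T \subseteq V - S$.
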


Next, we consider $\rho$-\sdds. Unlike the case of graphs, it is not clear how to write an integer programming formulation for \sdds whose LP-relaxation is polynomial-time solvable. Instead, 
we take inspiration from the very recent work of \cite{Włodarczyk_2024} on vertex deletion to reduce treewidth. We design a
combinatorial randomized algorithm that yields a bicriteria approximation for \sdds, where the bicriteria approximation bounds are based on a parameter $c_f$ that depends on the input supermodular function $f$. 
For a normalized supermodular function $f:2^V\rightarrow\R_{\ge 0}$, we define 
        \[c_f := \max\left\{\frac{\sum_{u \in S}f(u|S-u)}{f(S)}: S\subseteq V\right\}.
        \]
This parameter $c_f$ was defined in a recent work 
on \dss to unify the analysis of the greedy peeling algorithm for \dsg \cite{cqt-22}. 
We note that $1 \leq c_f \leq |V|$ and moreover, $c_f = 1$ if and only if the function $f$ is modular. If $f$ is the induced edge function of a graph (i.e., $f(S)$ is the number of edges with all its end-vertices in $S$ for every subset $S$ of vertices), then $c_f \le 2$. This follows from the observation that the sum of degrees is at most twice the number of edges in a graph. Similarly, if $f$ is the induced edge function of a hypergraph with rank $r$ (i.e., all hyperedges have size at most $r$), then $c_f \leq r$. 
We show the following bicriteria approximation for \sdds. 

\begin{restatable}{theorem}{thmbicriteriarandomdeletion}\label{thm:bicriteria-random-deletion}
    There exists a randomized polynomial time algorithm which takes as input a normalized  monotone supermodular function $f:2^V\rightarrow\Z$ (given by oracle access), element deletion costs $c:V\rightarrow\R_{\ge 0}$, a target density $\rho\in \R$, and an error parameter $\epsilon \in (0, 1)$, and returns a set $S\subseteq V$ such that:
    \begin{enumerate}
        \item $\lambda^*_{f|_{V - S}} \leq c_f(1+\epsilon)\cdot \rho$,  and 
        \item $\E\left[\sum_{u\in S}c_u\right] \leq c_f \left(1+\frac{1}{\epsilon}\right)\cdot \mathtt{OPT}$,
    \end{enumerate}
    where \texttt{OPT} denotes the cost of an optimum solution to \rhosdds{\rho} on the instance $(f,c)$.
\end{restatable}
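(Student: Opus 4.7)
The plan is to design an iterative randomized algorithm that maintains a growing deletion set $S \subseteq V$, initialized to $\emptyset$. In each iteration we compute a densest subset $T$ of $V \setminus S$ in polynomial time via the standard reduction of \dss to submodular function minimization. If $f(T)/|T| \leq c_f(1+\epsilon)\rho$, we halt and return $S$; otherwise, we sample a random subset $U \subseteq T$ via the rounding scheme described below and update $S \leftarrow S \cup U$.

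\textbf{Local covering LP and randomized rounding.} With the current dense witness $T$, we solve the polynomial-size LP
\[
\min \sum_{u \in T} c_u x_u \quad \text{s.t.} \quad \sum_{u \in T} f(u \mid T - u)\, x_u \;\geq\; f(T) - \rho |T|, \quad x \in [0,1]^T.
\]
Let $O \subseteq V$ be an optimal deletion set, so $\sum_{u \in O} c_u = \OPT$. By supermodularity of $f$ we have $f(T) - f(T - O) \leq \sum_{u \in O \cap T} f(u \mid T - u)$, and since $O$ reduces the global density to $\rho$, it follows that $f(T - O) \leq \rho |T - O| \leq \rho|T|$. Hence the indicator $\mathbf{1}_{O \cap T}$ is LP-feasible, so the LP optimum $x^*$ has value at most $\OPT$. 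We then sample each $u \in T$ independently with probability $p_u := \min\{1,\, (1+1/\epsilon)\, x_u^*\}$ to form $U$.

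\textbf{Cost and density analyses.} The expected cost per iteration is $\sum_{u \in T} p_u c_u \leq (1+1/\epsilon)\,\OPT$. For the density reduction, we use the supermodular inequality $f(T - U) \leq f(T) - f(T \cap U)$ (which follows from $f(T) + f(\emptyset) \geq f(T - U) + f(T \cap U)$) together with the bound $\sum_{u \in T} f(u \mid T - u) \leq c_f f(T)$. The boosted scaling by $(1+1/\epsilon)$ ensures $\E[\sum_{u \in U} f(u \mid T - u)] \geq (1+1/\epsilon)(f(T) - \rho|T|)$, and the $c_f$-bound converts this into an in-expectation density drop of $T$ (and of all of its subsets) by a multiplicative $(1 - \Omega(\epsilon))$ factor. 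A potential-function argument on the monotonically non-increasing quantity $f(V \setminus S)$, combined with standard Markov/restart tricks in the spirit of \cite{Włodarczyk_2024}, bounds the number of iterations and amortizes the per-iteration costs into the claimed total bicriteria bounds $c_f(1+\epsilon)\rho$ on density and $c_f(1+1/\epsilon)\,\OPT$ on expected cost.

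\textbf{Main obstacle.} The principal technical hurdle is the one-sided nature of supermodularity: removing $U$ from $T$ yields only the \emph{lower} bound $f(T - U) \geq f(T) - \sum_{u \in U} f(u \mid T - u)$, while the trivial upper bound $f(T - U) \leq f(T)$ is too weak---indeed, subsets of $T - U$ may actually be \emph{denser} than $T$ itself. Bridging this gap is exactly where the factor $c_f$ enters both guarantees: the inequality $\sum_{u \in T} f(u \mid T - u) \leq c_f f(T)$ combined with the supermodular inequality $f(T - U) \leq f(T) - f(T \cap U)$ is what translates the LP's fractional coverage of the ``excess mass'' $f(T) - \rho|T|$ into a genuine density drop across all subsets of the residual ground set, at the cost of a multiplicative $c_f$ blow-up in both parameters. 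Designing the amortization so that the expected cost stays within $c_f(1+1/\epsilon)\OPT$ across all iterations---rather than accumulating an extra $\log$-factor---is the delicate heart of the argument.
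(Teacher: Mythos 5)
Your proposal sketches a plausible algorithmic framework but leaves the proof's central difficulty unresolved, which you yourself flag as ``the delicate heart of the argument'': showing that the total expected cost over all iterations stays within $c_f(1+1/\epsilon)\OPT$ rather than accumulating an extra $\Omega(\log)$ factor. Bounding the cost per iteration by $(1+1/\epsilon)\OPT$, as you do via LP feasibility of $\mathbf{1}_{O\cap T}$, is the easy part; the hard part is that the number of iterations is not bounded by a constant, and a naive union over iterations gives no nontrivial total bound. The paper resolves this with a supermartingale: it samples a \emph{single} element per iteration proportional to $f(u\mid V-u)/c(u)$, and shows that $P_i := c(S_i) + \alpha\,\OPT(f_i)$ is a supermartingale with respect to the sampling sequence (this relies on the marginal-potential lemma that for any feasible $X$, $\sum_{u\in X}f(u\mid V-u)\geq \frac{1}{c_f(1+1/\epsilon)}\sum_{u\in V}f(u\mid V-u)$ whenever all marginals exceed $c_f(1+\epsilon)\rho$). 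Doob's Optional Stopping Theorem then directly yields $\E[c(S)]\leq \alpha\,\OPT$. Your proposal contains no analogue of this charging scheme; a ``potential-function argument on $f(V\setminus S)$ combined with Markov/restart tricks'' is not a proof, and it is not clear it can be made one without essentially reinventing the martingale argument.

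A secondary gap is the dense-decomposition preprocessing, which the paper uses to guarantee that every element of the residual ground set has marginal at least $c_f(1+\epsilon)\rho$ \emph{without changing $\OPT$} (Lemma~\ref{lem:dense-decomposition}). Your algorithm works only on the densest subset $T$, which does give $f(v\mid T-v)\geq f(T)/|T|>c_f(1+\epsilon)\rho$ for $v\in T$, but this controls density inside $T$ only; after you delete $U\subseteq T$, dense subsets of $V\setminus S$ that straddle $T$ and its complement must still be handled, and your claim that the density drop extends ``across all subsets of the residual ground set'' is unsupported. Relatedly, your remark that ``subsets of $T-U$ may actually be denser than $T$ itself'' cannot be correct when $T$ is the densest subset of the current ground set; the actual concern is subsets \emph{outside} $T$. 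In short, the approach (iterative randomized deletion with a $c_f$-dependent bicriteria slack) matches the paper's in spirit, but both the preprocessing lemma and, more importantly, the martingale charging argument are missing, so the stated guarantees do not follow from what you wrote.
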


As a consequence of Theorem \ref{thm:bicriteria-random-deletion}, we obtain a bicriteria approximation for density deletion problems in graphs and $r$-rank hypergraphs. We note that the bicriteria guarantee that we get from this theorem for graphs is weaker than the guarantee stated in Theorem \ref{thm:orientation-LP-bicriteria}. 
We discuss another special case of \sdds where the supermodular function of interest has bounded $c_f$ to illustrate the significance of Theorem \ref{thm:bicriteria-random-deletion}. Given a graph $G=(V, E)$ and a parameter $p\ge 1$, the $p$-mean density of $G$ is defined as $\max\{\sum_{u\in S}d_S(u)^p/|S|: S\subseteq V\}$, where $d_S(u)$ is the number of edges in $E[S]$ incident to the vertex $u$. The $p$-mean density of graphs was introduced and studied by Veldt, Benson, and Kleinberg \cite{vbk-21}. Subsequent work by Chekuri, Quanrud, and Torres \cite{cqt-22} showed that $p$-mean density is a special case of the densest supermodular set problem (i.e., \dss) where the supermodular function $f: 2^V\rightarrow \R_{\ge 0}$ of interest is given by $f(S):=\sum_{u\in S}d_S(u)^p$ for every $S\subseteq V$ and moreover $c_f\le (p+1)^p$. We note that the natural vertex deletion problem is the $p$-mean density deletion problem: 
given a graph $G=(V, E)$ with vertex deletion costs, 
find a min-cost subset of vertices to delete so that the $p$-mean density of the remaining graph is at most a given threshold. Since $c_f\le (p+1)^p$ for the function $f$ of interest here, Theorem \ref{thm:bicriteria-random-deletion} implies a bicriteria approximation for $p$-mean density deletion  for integer-valued $p$. 
An interesting open question is to obtain better bicriteria approximation for \rhosdds{\rho}---in particular, can we remove the dependence on $c_f$? 

\paragraph{Organization.} The main body of the paper is organized as follows. In \Cref{sec:preliminaries}, we give preliminaries which will be used throughout the technical sections. In \Cref{sec:set-cover-hardness}, we show an approximation-preserving reduction from \setcover to \rhodds{\rho} and prove \Cref{thm:DDS-logn-hard}. In \Cref{sec:bicriteria-algorithm}, we give bicriteria approximations for 
\rhodds{\rho} and \rhosdds{\rho}. Here, we prove \Cref{thm:orientation-LP-bicriteria} in \Cref{sec:orientation-bicriteria} and \Cref{thm:bicriteria-random-deletion} in \Cref{sec:supmodDD-bicriteria}. Finally, in \Cref{sec:submodcover-reductions} we show the connections between \sdds and \textsc{Submodular Cover} by proving Theorems \ref{thm:sddsToSubmodCover} and \ref{thm:SubmodCovertosdds}.

\subsection{Preliminaries: Characterizing Density Using Orientations}\label{sec:preliminaries}
In this section, we give a characterization of density via \emph{(fractional) orientations} which we use throughout the technical sections. We recall that an \emph{orientation} of a graph $G=(V,E)$
assigns each edge $\{u,v\}$ to either $u$ or $v$. A \emph{fractional orientation} is an assignment of
two non-negative numbers $z_{e,u}$ and $z_{e,v}$ for each edge $e=\{u,v\} \in E$ such that 
$z_{e, u}+ z_{e,v}=1$. We note that an orientation is a fractional orientation where all values in the assignment are either $0$ or $1$. For notational convenience, we use $\vec{G} = (V, \vec{E})$ to denote an orientation of the graph $G$ and $\indegree_{\vec{G}}(u)$ to denote the indegree of a vertex $u \in V$ in the oriented graph $\vec{G}$.

The following connection between density and (fractional) orientations will be used in our hardness reduction in \Cref{sec:set-cover-hardness}, our ILP in  \Cref{sec:orientation-bicriteria}, and the connection between \rhodensitydeletionset and \mfvs in \Cref{appendix:sec:reductions}. 
\begin{proposition}\label{thm:orientation-characterization}
Let $G=(V, E)$ be a graph. Then, we have the following:
\begin{enumerate}
    \item let $\rho \in \R_{\geq 0}$ be a real value; then, $\lambda^*_G\le \rho$ if and only if there exists a fractional orientation $z$ such that $\sum_{e\in \delta(u)}z_{e,u}\le \rho$ for every $u\in V$, and
    \item let $\rho \in \Z_{\geq 0}$ be an integer value; then, $\lambda^*_G\le \rho$ if and only if there exists an orientation $\vec{G} = (V, \vec{E})$ of the graph $G$ such that $\indegree_{\vec{G}}(u) \leq \rho$ for every $u\in V$.
\end{enumerate}
\end{proposition}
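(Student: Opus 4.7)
The plan is to prove both statements together via an auxiliary max-flow network, leveraging integrality of max-flow with integer capacities for the second claim.

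The easy $(\Leftarrow)$ direction follows from a single counting argument for both parts. If $z$ is a fractional orientation with $\sum_{e \in \delta(u)} z_{e,u} \leq \rho$ for every vertex $u$, then for any $S \subseteq V$,
\begin{align*}
|E(S)| = \sum_{e = \{u,v\} \in E(S)} (z_{e,u} + z_{e,v}) \leq \sum_{u \in S} \sum_{e \in \delta(u)} z_{e,u} \leq \rho |S|,
\end{align*}
giving $\lambda^*_G \leq \rho$; part 2's easy direction is subsumed because integer orientations are a special case of fractional orientations.

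For the $(\Rightarrow)$ direction I would construct a flow network $N$ with source $s$, sink $t$, and one node for each $e \in E$ and each $u \in V$. The arcs are: $s \to e$ of capacity $1$ for each edge $e$; $e \to u$ and $e \to v$ of capacity $1$ for each edge $e = \{u,v\}$; and $u \to t$ of capacity $\rho$ for each vertex $u$. Any $(s,t)$-flow of value $|E|$ must saturate every $s \to e$ arc, so it routes one unit of flow out of each edge-node $e$ split between the two arcs to its endpoints; setting $z_{e,u}$ equal to the flow on $e \to u$ then yields a fractional orientation with $\sum_{e \in \delta(u)} z_{e,u} \leq \rho$ at every $u$. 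Hence it suffices to show that the max flow in $N$ equals $|E|$. Parameterizing $(s,t)$-cuts by the set $A \subseteq V$ of vertex-nodes placed on the source side, a short case analysis on each edge-node $e$ shows that its optimal placement contributes $0$ to the cut if both endpoints lie in $A$ (put $e$ on the source side) and $1$ otherwise (put $e$ on the sink side and cut $s \to e$). Combined with $\rho|A|$ from the $u \to t$ arcs for $u \in A$, the min-cut value given $A$ is $\rho|A| + |E| - |E(A)|$. Under the hypothesis $\lambda^*_G \leq \rho$ we have $|E(A)| \leq \rho|A|$ for every $A$, so this quantity is always at least $|E|$, with equality achieved at $A = \emptyset$. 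By max-flow/min-cut the max flow equals $|E|$, proving part 1; for part 2 the integer capacity $\rho$ together with integrality of max-flow delivers an integer-valued orientation.

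The main technical step is the edge-node case analysis that yields the clean min-cut formula $\rho|A| + |E| - |E(A)|$; beyond this the argument is standard max-flow/min-cut machinery, and the statement itself is essentially a version of Hakimi's orientation theorem together with its fractional analogue.
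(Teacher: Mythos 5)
Your proof is correct. The $(\Leftarrow)$ counting argument is the standard one; for $(\Rightarrow)$ the bipartite-style flow network is set up properly, the min-cut parameterization by the set $A$ of vertex-nodes on the source side is right, and the per-edge-node case analysis gives exactly the cut value $\rho|A| + |E| - |E(A)|$, which is at least $|E|$ whenever $\lambda^*_G \leq \rho$ and equals $|E|$ at $A = \emptyset$. Recovering $z$ from a max flow of value $|E|$ (and appealing to integrality of max flow when $\rho \in \Z_{\geq 0}$) correctly yields both parts.

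The paper itself does not supply a proof of this proposition: it cites the dual of Charikar's LP for the fractional part and Hakimi's orientation theorem (via Frank's book) for the integer part. Your argument is a self-contained alternative that handles both parts uniformly via a single max-flow/min-cut computation, with integrality of max flow doing the work that Hakimi's theorem would otherwise do. What LP duality buys the paper is brevity (the $(\Rightarrow)$ direction falls out of complementary slackness with no auxiliary network to construct); what your route buys is an elementary, combinatorial, and simultaneously constructive proof of both statements, and it makes transparent exactly where integrality of $\rho$ is used. One minor caveat worth being aware of: your counting step $|E(S)| = \sum_{e \in E(S)}(z_{e,u}+z_{e,v})$ implicitly assumes $e=\{u,v\}$ has two distinct endpoints; the paper's instances do contain self-loops, for which the natural convention is that a self-loop at $u$ contributes $1$ to $\indegree(u)$ under any orientation. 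Your flow construction and cut formula go through unchanged under this convention (a self-loop node $e$ has both "endpoint" arcs going to $u$, and $e$ belongs to $E(A)$ exactly when $u \in A$), so this is a remark rather than a gap.
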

The first part of the proposition states that a graph $G$ has density at most $\rho$
if and only if the edges of $G$ can be fractionally oriented such that the total fractional in-degree at every vertex is at most $\rho$. This characterization is implied by the dual of Charikar's LP \cite{charikar_greedy_2000} to solve \dsg.
The second part of the proposition is a strengthening of the first part when $\rho$ is an integer and states that a graph $G$ has density at most $\rho$ iff the edges of $G$ can be (integrally) oriented such that the in-degree at every vertex is at most $\rho$.  This characterization is implied by a result of Hakimi \cite{Frank-book} and also via the dual of Charikar's LP \cite{charikar_greedy_2000}.  

\section{Approximation Hardness}\label{sec:set-cover-hardness}

%\cnote{We should mention that self-loops etc can be removed via gadgets and the reduction holds for unweighted graphs by appropriate duplication. Briefly discussed this with Shubhang last week.} \shubhang{added this as remark 2.1. Where should we place this? I've placed it early in the section at the moment in the hopes that it would mitigate complaints from a reviewer starting to read the reduction before even getting too far in.} \cnote{I think you put the remark in the right place.}

In this section, we show \Cref{thm:DDS-logn-hard}, i.e., we show an approximation preserving reduction from \setcover to \rhodds{\rho}. We recall the set cover problem and its inapproximability. 

\begin{definition}[\setcover]
    The set cover problem, denoted \setcover, is defined as follows:
     \begin{mdframed}
\begin{tabular}{ l l }
 \emph{\textbf{Input}:} &  Finite Universe $\calU$, Family $\calS \subseteq 2^\calU$ with costs $c: \calS \rightarrow \R_{\ge 0}$
    \vspace{1mm}\\  
 \emph{\textbf{Goal}:} & $\arg \min\{\sum_{S\in \mathcal{F}}c_e : \mathcal{F} \subseteq \calS \text{ with } \cup_{S \in \mathcal{F}}S = \calU\}$. 
\end{tabular}
\end{mdframed}
\end{definition}

%\shubhang{todo: state set cover hardness.}
\begin{theorem}\cite{feige-98,DS14}\label{thm:set-cover-hardness}
    For every $\epsilon>0$, there does not exist a $(1-\epsilon)\ln{n}$-approximation for \setcover assuming P $\neq$ NP, where $n$ is the size of the input instance. 
\end{theorem}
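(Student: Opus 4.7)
The plan is to prove the tight logarithmic inapproximability of \setcover via a gap-amplifying reduction from a hard problem in the PCP hierarchy, specifically \textsc{Label Cover} with perfect completeness and sub-constant soundness. The starting point is the following: by repeated parallel repetition of the PCP theorem, for every $\delta>0$ there is a Label Cover instance $\mathcal{L}=(X,Y,E,\Sigma,\{\pi_e\}_{e\in E})$ such that distinguishing between the case when all edges can be simultaneously satisfied and the case when at most a $\delta$-fraction can be satisfied is NP-hard, where $|\Sigma|$ can be taken super-constant (but still polynomially bounded). The overall strategy is to design a gadget (a \emph{partition system}) that converts this gap into a $(1-\epsilon)\ln n$ gap for \setcover.

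The second step is the partition system gadget. For parameters $m$ and $L$, one constructs a universe $U$ of size $\mathrm{poly}(m,L)$ together with $L$ partitions $\mathcal{P}_1,\ldots,\mathcal{P}_L$, each into $m$ ``color classes'' of equal size, with the property that any sub-collection of color classes which covers $U$ and uses no two classes from the same partition must contain at least $(1-\epsilon)\ln m$ classes. Such systems can be constructed either explicitly (via randomized sampling with a union bound over all small subcollections) or via derandomized constructions based on probabilistically checkable proofs of proximity. The key quantitative feature is that the lower bound $(1-\epsilon)\ln m$ is tight up to lower order terms, which is exactly what will propagate to the final inapproximability factor.

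The reduction itself proceeds as follows. For each constraint edge $e\in E$ of $\mathcal{L}$, attach a fresh copy $U_e$ of a partition system with $m=|\Sigma|$ and $L$ tuned to the gap parameters. The universe of the Set Cover instance is $\bigcup_{e\in E}U_e$. For each vertex $x\in X$ and each candidate label $\sigma\in\Sigma$, introduce a set $S_{x,\sigma}$ containing, for every edge $e$ incident to $x$, the union of color classes in $U_e$ that are ``consistent'' with assigning $\sigma$ to $x$ under $\pi_e$; analogous sets $S_{y,\tau}$ are introduced on the $Y$-side so that the pair $(S_{x,\sigma},S_{y,\pi_e(\sigma)})$ jointly cover $U_e$ for every satisfied edge. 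Completeness: a satisfying labeling yields a cover of size $|X|+|Y|$. Soundness: if the Label Cover instance is $\delta$-far from satisfiable, then any $\setcover$ solution of size $k(|X|+|Y|)$ can be randomly rounded (using the sizes of chosen sets as a distribution over labels) to produce a labeling satisfying a $\Omega(1/k^2)$-fraction of edges, which for $k<(1-\epsilon)\ln m$ contradicts the soundness of $\mathcal{L}$. Setting $|\Sigma|$ as a polynomial in $n$ while keeping $|X|+|Y|$ much smaller than $n$ yields the claimed $(1-\epsilon)\ln n$ lower bound.

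The main obstacle is the soundness analysis: the naive argument gives a factor of $c\cdot \ln n$ for some constant $c<1$, and squeezing this all the way to $(1-\epsilon)\ln n$ for every $\epsilon>0$ requires that the soundness loss in both the partition system and the rounding step be made simultaneously negligible. Feige's original proof handled this via $k$-prover systems with strong soundness amplification, while Dinur--Steurer achieved it through an ``agreement testing'' framework avoiding the multi-prover overhead. In either approach, the delicate balance lies in choosing $|\Sigma|$, $L$, and $\delta$ as functions of $\epsilon$ so that (i) the partition system still delivers the $(1-\epsilon)\ln m$ bound, (ii) the rounding recovers enough of a labeling to violate soundness, and (iii) the final instance size $n$ is polynomial. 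Verifying this parameter calibration is the technical heart of the proof, and for the purposes of this paper we invoke the result of \cite{feige-98,DS14} as a black box.
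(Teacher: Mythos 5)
This statement is not proved in the paper at all: it is an imported result, cited verbatim from Feige and from Dinur--Steurer, and used purely as a black box in the reduction of \Cref{sec:set-cover-hardness}. Your write-up is a reasonable high-level summary of how those external proofs go (label cover / multi-prover soundness amplification plus partition-system gadgets), and you correctly conclude by deferring the parameter calibration to the cited works --- which is exactly what the paper does, so there is nothing to reconcile. The only caution is that your sketch should not be mistaken for a proof: the description of the partition-system covering property and the claim that $2$-prover Label Cover with rounding suffices are imprecise (as you yourself note, the naive $2$-prover route only yields $c\ln n$ for some $c<1$, and the tight constant genuinely requires the $k$-prover or agreement-testing machinery), so the citation remains load-bearing.
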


\paragraph{Warmup. } As a warmup, we describe the idea underlying the reduction to prove \Cref{thm:DDS-logn-hard} by giving the proof of a weaker hardness result---we will show that there is no $(\rho+1-\epsilon)$-approximation for \rhodensitydeletionset unless $P = NP$. In particular, given a set cover instance $(\calS, \calU, c)$, we will construct a graph $H = (V_H, E_H)$ with vertex deletion costs $c_H:V\rightarrow\R$ such that a $(\fmax-1)$-\densitydeletionset of $H$ corresponds exactly to a set cover of $(\calS, \calU)$, where $f_{\max}$ is the maximum frequency of an element. Since there is no $(f_{\max} - \epsilon)$-approximation for \setcover, we obtain the claimed result. For simplicity (and without loss of generality), we henceforth assume that every element has the same frequency $f_{\max}$. Our reduction proceeds by considering the incidence graph of the set cover instance---this is the bipartite graph where there is a vertex $v_S$ for each set $S \in \calS$, vertex $u_e$ for each element $e \in \calU$, and edges $(v_S, u_e)$ for each $S\in\calS$ and $e \in \calU$ such that $e \in S$. We make a simple modification to this incidence graph to obtain the graph $H = (V_H, E_H)$: for every set $S\in \calS$, we add $f_{\max}-1$ self loops to the vertex $v_S$. We also define the the cost function as $c_H(v_S) = c(S)$ for all $S \in \calS$ and $c(u_e) = \infty$ for all $e\in \calU$.

\begin{remark}
    The addition of self-loops is not technically necessary for the construction; they simply make it more straightforward. Specifically, a vertex \( u \) with \( \gamma \in \mathbb{Q}_{\geq 1} \) self-loops can be replaced by a subgraph with density exactly \( \gamma \). In this subgraph,  one vertex, say \( h_u \), is identified with $u$ and its cost is defined as \( c(h_u) := c(u) \). All other vertices have infinite cost. All edges incident to \( u \) are then redirected to connect to \( h_u \) instead.
\end{remark}

The intuition behind the addition of self-loops is that the subgraph induced by every element-vertex and its neighborhood (which contains exactly $f_{\max}$ set-vertices) is strictly larger than $\rho$. This is because the number of edges here is $\fmax (1 + \fmax)$ whereas the number of vertices is $(1 + \fmax)$. Consequently, any $(\fmax-1)$-\densitydeletionset must delete at least one vertex from this induced subgraph. Our cost function ensures that the deleted vertices are set-vertices, and so these correspond to a set cover. We argue the reverse direction of the reduction by leveraging the connection between integer density and integral orientations. We now show the reverse direction. Let $F \subseteq \calS$ be a set cover. Let $X_F = \{v_S : S \in F\}$. Then, we show that the graph $H - X_F$ has density at most $\rho$ by exhibiting a simple orientation in which the indegree of every vertex is at most $\rho$: for every remaining edge of type $(v_S, u_e)$, orient the edge from $v_S$ to $u_e$. Orient all self-loops arbitrarily. We note that the indegree of every vertex $v_S$ is at most $\rho$ since there are only $\rho$ self loops by construction. By way of contradiction, suppose that the indegree of a vertex $u_e$ was strictly larger than $\rho$. We note that by construction, $d_H(u_e) = \rho+1$. Consequently, no neighbor of $u_e$ is in $X_F$. Thus, $F$ is not a set cover, a contradiction.

Our proof of \Cref{thm:DDS-logn-hard} follows the same high-level strategy: obtain an appropriate modification of the incidence graph so that for every element, there is an appropriate subgraph with density larger than the target density. However, in contrast to the situation in the reduction we described above, \Cref{thm:DDS-logn-hard} is a statement about all \emph{fixed} integer target densities at least $2$ and so we cannot set the target density as a parameter in our reduction. This leads to additional complications which we overcome by replacing the vertices and edges of the incidence graph with appropriate (binary tree) gadgets. 
%Moreover, since we work with fractional target densities, our orientation based-argument (for the reverse direction of the correctness proof)  requires fractional orientations and fractional path reorientations. 
We now restate and prove \Cref{thm:DDS-logn-hard}.

\thmDDSlognHard*
\begin{proof}%[Proof of \Cref{thm:DDS-logn-hard}]
    We show the theorem when the target density $\rho = 2$ via a reduction from \setcover. At the end of the proof, we remark on how to modify the reduction to obtain hardness for all integers $\rho \geq 2$ as claimed in the theorem. Let $(\calS, \calU, c:\calS\rightarrow\R)$ be a \setcover instance. We will assume (without loss of generality) that all elements in $\calU$ have the same frequency $f \geq 4$ which is a power of $2$---we note that this assumption is not a technical requirement and is only for ease of exposition. For this instance, we construct an instance $(G = (V, E), c_G:V\rightarrow\R)$ of $2$-\densitydeletionset as follows:
    \begin{enumerate}
         \item \emph{Add vertices representing sets:} For each set $S\in \calS$, add a set-vertex $v_S$ to $V$.
         
        \item \emph{Add binary trees representing elements:}
        For each element $e\in \calU$, add a complete binary tree with the $f$ leaves as the set-vertices corresponding to the sets containing $e$. We denote this tree as $\calT_e$ and its root as $r_e$.
        % Identify each leaf of $\calT_e$ with a distinct set $S \in \calS$ such that $e \in S$. 
        % We use $\ell_{e, S}$ to denote the leaf of $\calT_e$ identified with set $S$.

        % \item \emph{Add paths representing element-set incidence:} For each $e\in \calU$ and $S \in \calS$ such that $e\in S$, add a path from $v_S$  to $\ell_{e,S}$ that has $k := \max\left\{\ceil{\frac{1}{\beta} - 2}, 0\right\}$ new vertices. 
        % % \cnote{Use ceiling to make $k$ an integer in the definition.}\shubhang{edited.}
        % Let this path be $v_S,  a_{e, S}^1, a_{e, S}^2, \ldots, a_{e, S}^k, \ell_{e,S}$ and let $\calP_{e, S} := \{a_{e, S}^i : i \in [k]\}$ denote the set of new vertices added. 
        % \cnote{$v_s$ should be $v_S$ in the path?}\shubhang{Yes. edited.}

        \item \emph{Add self-loops:}
        For each element $e \in E$, add a self-loop to the root vertex $r_e$ of the tree $\calT_e$. For each set $S \in \calS$, add $\rho=2$ self-loops to the vertex $v_S$.

        \item \emph{Define cost function:} We define the cost function $c_G:V\rightarrow\R$ as follows: $c_G(v_S) = c(S)$ for all $S \in \calS$ and $c_G(u) = \infty$ for all $u \in V - \{v_S : S\in \calS\}$.
    \end{enumerate}

    \iffalse
\begin{figure}
    \centering
    % First figure on the left
    \begin{minipage}{0.45\textwidth}
        \centering
        \includegraphics[width=\textwidth, trim={9cm 4.5cm 8cm 2.5cm}, clip]{technical-sections/diagrams/Apx-Hardness-Diagram-beta-atmost-half.pdf} % Replace with your image path
        \caption{The subgraph of the instance constructed from the reduction corresponding to element $e \in \calU$. Here, $f = 8$ and $\beta = 1/4$.}
        \label{fig:construction:first}
    \end{minipage}
    \hfill
    % Second figure on the right
    \begin{minipage}{0.45\textwidth}
        \centering
        \includegraphics[width=\textwidth, trim={10cm 4cm 8cm 2.5cm}, clip]{technical-sections/diagrams/Apx-Hardness-Diagram-beta-atleast-half.pdf} % Replace with your image path
        \caption{The subgraph of the instance constructed from the reduction corresponding to element $e \in \calU$. Here, $f = 8$ and $\beta \geq 1/2$.}
        \label{fig:construction:second}
    \end{minipage}
\end{figure}
\fi
\begin{figure}
    \centering
    \includegraphics[width=\textwidth, trim={0cm 3cm 0cm 3cm}, clip]{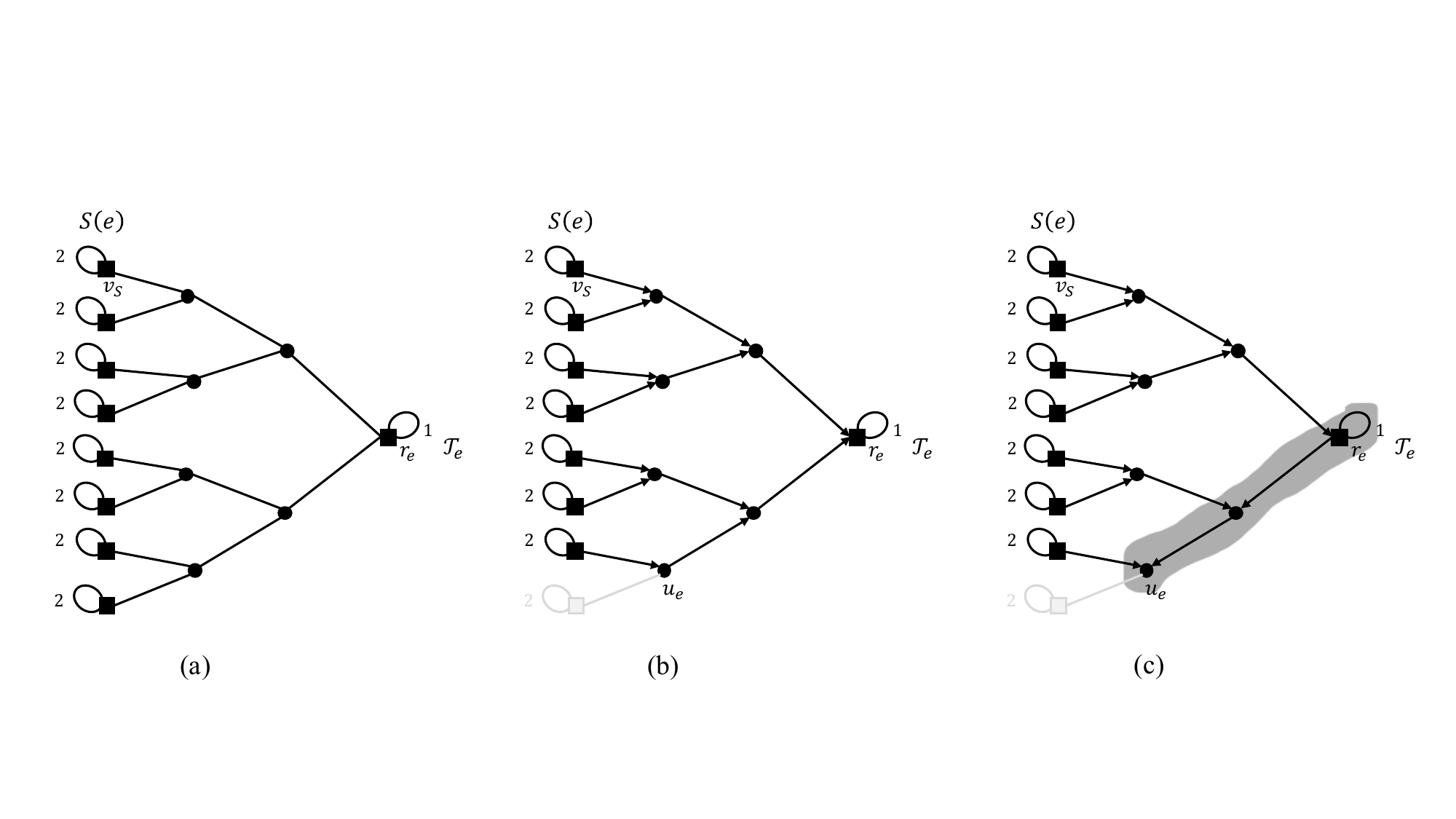}
    \caption{The figure in (a) depicts the subgraph of the construction corresponding to element $e \in \calU$. Here, $f = 8$. The figure in (b) depicts the  intermediate orientation $\vec{H}$ for the subgraph of $H$ corresponding to an element $e \in \calU$. The greyed-out set-vertex at the bottom represents that this vertex is in $X_F$. The figure in (c) depicts the final orientation for the subgraph from the figure in (b) after reorientation. The highlighted edges are those that have been reoriented.}
    \label{fig:gadget-construction}
\end{figure}

    We refer the reader to \Cref{fig:gadget-construction}(a) for pictorial depictions of the instance constructed via the reduction above. The next claim shows the correctness of our reduction and also implies that our reduction is approximation-preserving. The approximation hardness guarantee of the theorem when the target density $\rho = 2$ then follows by \Cref{thm:set-cover-hardness} and the observation that the number of vertices $|V|$ is a constant factor of the size of the input set cover instance.

    \begin{claim}\label{claim:reduction-correctness}
         The instance $(G, c_G)$ has a feasible solution to $2$-\densitydeletionset of finite cost $T$ if and only if $(\calS, \calU, c)$ has a \setcover of cost $T$.
    \end{claim}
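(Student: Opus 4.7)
The plan is to prove both directions of the biconditional, relying on the integer-orientation characterization of density (\Cref{thm:orientation-characterization}, part~2) to rephrase $\lambda^*_{G - S} \le 2$ as the existence of an orientation of $G - S$ in which every vertex has in-degree at most $2$.

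For the $\Leftarrow$ direction, given a set cover $F \subseteq \calS$ of cost $T$, I will argue that $X_F := \{v_S : S \in F\}$ is a feasible $2$-\densitydeletionset whose $c_G$-cost is $T$ by construction. Feasibility reduces to exhibiting a valid orientation of $G - X_F$. The proposed orientation is: in each tree $\calT_e$, pick any deleted leaf $\ell_e^* \in X_F$ (one exists because $F$ covers $e$), let $P_e$ be the path in $\calT_e$ from $r_e$ to the parent of $\ell_e^*$, orient every surviving tree edge on $P_e$ downward (away from $r_e$) and every surviving tree edge off $P_e$ upward (child to parent), and count each self-loop as contributing $1$ to the in-degree of its vertex. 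I then verify by case analysis on whether a vertex lies on $P_e$ that every in-degree is at most $2$: internal vertices off $P_e$ receive at most one upward edge from each child, on-$P_e$ vertices receive one downward edge from their parent and at most one upward edge from their off-$P_e$ child, and $r_e$ uses one of its two budgeted units for its self-loop. The critical observation is that surviving leaves $v_S$ never lie on any $P_{e'}$, so all of their tree edges (in each of the trees in which they occur as leaves) point outward, leaving their in-degree saturated precisely by their two self-loops.

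For the $\Rightarrow$ direction, suppose $X \subseteq V$ is a feasible solution of finite total cost $T$. The infinite cost $c_G(u) = \infty$ for every non-set-vertex forces $X \subseteq \{v_S : S \in \calS\}$, so $F := \{S : v_S \in X\}$ has $c$-cost exactly $T$. If $F$ were not a set cover, some $e \in \calU$ would be uncovered; then no leaf of $\calT_e$ lies in $X$, and neither do the internal tree vertices (by their infinite cost), so the full vertex set $V(\calT_e)$ survives in $G - X$. A direct count of the induced subgraph yields $2f - 1$ vertices and $(2f - 2) + 1 + 2f = 4f - 1$ edges (the tree edges, the self-loop at $r_e$, and two self-loops at each of the $f$ leaves), so its density equals $\frac{4f - 1}{2f - 1} = 2 + \frac{1}{2f - 1} > 2$ since $f \ge 4$, contradicting $\lambda^*_{G - X} \le 2$. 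Hence $F$ is a set cover of cost $T$, as required.

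The main obstacle is arranging the orientation in the forward direction so that all vertex in-degrees are simultaneously at most $2$: set-vertices occur as leaves across many trees, internal tree vertices see an arbitrary mix of surviving and deleted children, and roots already carry one unit of in-degree from their self-loop. The ``path-down, otherwise-up'' scheme is chosen precisely so that the single deleted leaf designated for each tree acts as a sink that drains the upward flow that would otherwise accumulate at the root.
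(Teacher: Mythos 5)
Your proof is correct and uses essentially the same construction as the paper's. In the backward direction you specify the final orientation directly (edges on the path from $r_e$ to the parent of a chosen deleted leaf oriented downward, all other tree edges oriented upward, self-loops contributing one unit of in-degree), whereas the paper reaches the same orientation in two stages by first orienting every tree edge upward and then reversing the path from a low-in-degree vertex (which is precisely the parent of a deleted leaf, by its Observation 3.1(3)) up to $r_e$; the resulting orientations and the in-degree bookkeeping coincide, and your forward direction matches the paper's density count for the untouched gadget $\calT_e$ verbatim.
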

    \begin{proof}
        We first show the forward direction of the claim. Let $X \subseteq V$ be a feasible solution to $2$-\densitydeletionset of finite cost $T$. By construction, we have that $X \subseteq \{v_S : S \in \calS\}$. Let $F := \{S : v_S \in X\}$ denote the corresponding sets. By way of contradiction, suppose $F$ is not a set cover. Consequently, there exists an element $e \in \calU$ not covered by $F$. For convenience, we use $\calS(e) := \{S\in \calS:e\in S\}$ to denote the sets that contain the element $e$. We note that since the element $e$ is not covered by $F$, we have that $\{v_S:S\in\calS(e)\}\cap X = \emptyset$. Let $V_e$ denote the set of vertices obtained by including all vertices of the binary tree $\calT_e$. Then, the following gives us a contradiction: 
        $$ 2 \geq \lambda_{G-X}^* \geq \frac{|E[V_e]|}{|V_e|} = \frac{(2f + 1) + (2f-2)}{2f - 1} > 2.$$
        % \cnote{Is $V_e$ defined?}\shubhang{defined in the preceeding sentence.}
        % \begin{align*}
        %     2 \geq \lambda_{G-X}^* \geq \frac{|E[V_e]|}{|V_e|} & = \frac{(2f + 1) + (2f-2)}{2f - 1} > 2 \frac{|E[\calT_e]| + \sum_{S \in \calS(e)}(|E[\calP_{e, S}]| + 2 + E[v_{S}])}{|\calT_{e, S}| + \sum_{S \in \calS(e)}|\calP_{e, S}| + 1} \\
        %     & = \frac{2f-1 + f(\frac{1}{\beta} - 2 + 2 + 1 + \beta)}{2f-1 + f(\frac{1}{\beta} - 2 + 1)}\\
        %     & = 1 + \frac{f(2 + \beta)}{f - 1 + \frac{f}{\beta}} \\
        %     & > 1 + \beta.
        % \end{align*}
        Here, the first inequality is because $X$ is a feasible solution to $2$-\densitydeletionset. The second inequality is by definition of graph density. The equality is because there are $(2f+1)$ self loop edges, $(2f-2)$ non-self loop edges, and $2f-1$ vertices in the tree $\calT_e$ by construction.
        %The final strict inequality holds by the simple observation that $f(2 + \beta) > \beta (f - 1 + \frac{f}{\beta})$.

        We now show the reverse direction. Let $F \subseteq \calS$ be a set cover. Let $X_F = \{v_S : S \in F\}$. Then, we show that the graph $H := G - X_F$ has density at most $2$. By \Cref{thm:orientation-characterization}(2), it suffices to exhibit an orientation of the graph $H$ in which the indegree of every vertex is at most $2$. We first consider the following intermediate orientation of $G$. For each element $e\in \calU$, we do the following: for vertex $u \in \calT_e - \{r_e\}$, we denote $p(u)$ as the (unique) parent of $u$ in the (rooted) tree, and we orient the edge $(u, p(u))$ towards the parent $p(u)$. All self-loops are assumed to be trivially oriented.
        % \begin{itemize}
        %     \item For each  $S\in \calS(e)$, orient the edge $(u_S, \ell_{e, S})$ fully towards  the leaf $\ell_{e, S}$, i.e. $z(u_S, \ell_{e, S}) = 1$ and $z(\ell_{e, S}, u_S) = 0$.
        %     \item For vertex $u \in \calT_e$, we denote $p(u)$ as the (unique) parent of $u$ in the tree. For vertex $u \in \calT_e$, we orient $\beta$ fraction of the edge $(u, p(u))$ towards $u$ and the rest towards $p(u)$, i.e. $z(u, p(u)) = \beta$ and $z(p(u), u) = 1 - \beta$.
        % \end{itemize}
\iffalse
\begin{figure}
    \centering
    % First figure on the left
    \begin{minipage}{0.45\textwidth}
        \centering
        \includegraphics[width=\textwidth, trim={10cm 5cm 9cm 3cm}, clip]{Full-version/technical-sections/diagrams/dds-diagram-hardness-2.pdf} % Replace with your image path
        \caption{The  intermediate orientation for the subgraph of $H$ corresponding to an element $e \in \calU$. The greyed-out set-vertex at the bottom represents that this vertex is in $X_F$.}
        \label{fig:orientation:first}
    \end{minipage}
    \hfill
    % Second figure on the right
    \begin{minipage}{0.45\textwidth}
        \centering
        \includegraphics[width=\textwidth, trim={10cm 5cm 9cm 3cm}, clip]{Full-version/technical-sections/diagrams/dds-diagram-hardness-3.pdf} % Replace with your image path
        \caption{The final orientation for the subgraph from \Cref{fig:orientation:first}. The highlighted edges are those that have been reoriented.}
        \label{fig:orientation:second}
    \end{minipage}
\end{figure}
\fi
        % For convenience, we denote $H = (V_H, E_H) := G - X_F$ as the graph obtained after deleting $X_F$, 
        Let $\vec{H} := (V_H, \vec{E}_H)$ denote this intermediate orientation restricted to the graph $H$, and $\calR := \{r_e: e\in \calU\}$ denote the set of all root vertices.
        Refer to \Cref{fig:gadget-construction}(b)
        % \Cref{fig:orientation:first}
        for a pictorial depiction of   orientation $\vec{H}$.
        We now make three important observations regarding the indegrees in the orientation $\vec{H}$. 
        \begin{observation}\label{obs:apx-hardness-reduction-helper:orientation-properties}
        We have the following:
        \begin{enumerate}
            \item  for all $u \in V_H - \calR$, $\indegree_{\vec{H}}(u) \leq 2$,
            \item for all $r \in \calR$, $\indegree_{\vec{H}}(r) = 3$, and
            \item for each element $e \in \calU$, there exists a set $S_e \in \calS(e)$ such that $\indegree_{\vec{H}}(p(v_{S_e})) \leq 1$.
        \end{enumerate}
        \end{observation}
        \begin{proof}
            We show each statement separately below.
            \begin{enumerate}
                \item We note that the statement easily follows by construction for the set vertices $v_S$. Let $e \in \calU$ be an arbitrary element and let $u \in \calT_e - \left(\left\{v_S : S \in \calS(e)\right\} \cup \left\{r_e\right\}\right)$ be a non-root internal vertex of the binary tree. Since $u$ has exactly two child nodes and one parent node in $\calT_e$, we have that $\indegree_{\vec{H}}(u) \leq 2$. We note that the inequality may be strict if any children of $u$ belong to the set $X_F$.
                \item Let $e \in \calU$. We note that $r_e$ has exactly $2$ children and $1$ self loop, and consequently $\indegree_{\vec{H}}(r_e) \leq 3$ as claimed. Here, we note that bo child of $r_e$ belongs to the set $X_F$ because of our simplifying assumption that $f \geq 4$.
                \item Let $e \in \calU$. 
                %We note that by construction, $\indegree_z(\ell_{e, S}) = 1$ if $v_S \in V_H$ and $0$ otherwise. 
                Since $F$ is a set cover, there exists a set $S \in S(e)$ such that $S \in F$. Consequently, $v_S \in X_F$, and so $\indegree_{\vec{H}}(p(v_S)) \leq 1$ by construction.
            \end{enumerate}
        \end{proof}
        We now use the orientation $\vec{H}$ and \Cref{obs:apx-hardness-reduction-helper:orientation-properties} to construct the orientation which certifies that the graph $H$ has density at most $2$. We note that by \Cref{obs:apx-hardness-reduction-helper:orientation-properties}(1) and (2), it suffices to modify the orientation $\vec{H}$ to reduce the indegree of all root vertices in $\calR$ to $2$ while keeping all other indegrees at most $2$. Consider an arbitrary element $e \in \calU$. Let $u_e$ be an arbitrary vertex of $\calT_e$ such that $\indegree_{\vec{H}}(u_e) \leq 1$. We note that such a vertex exists by \Cref{obs:apx-hardness-reduction-helper:orientation-properties}(3). We consider the unique path $P_e$ from $u_e$ to $r_e$ in $\calT_e$. We note that by the construction of the orientation $\vec{H}$, every edge along this path is oriented in the direction of the path. Consider the orientation obtained by reorienting these edges in the reverse direction of the path. Refer to \Cref{fig:gadget-construction}(c)
        % \Cref{fig:orientation:second} 
        for a pictorial depiction of the modified orientation. We note that only the indegrees of $r_e$ and $u_e$ change due to this reorientation. In particular, for all $e \in \calU$, we have that $\indegree_{\vec{H}}(u_e) \leq 2$ and $\indegree_{\vec{H}}(r_e) = 2$. This concludes the proof.
     \end{proof}
    The preceding reduction can be modified to show approximation hardness for all integral $\rho \geq 2$. Suppose that $\rho = 2 + \alpha$, where $\alpha \in \Z_{\geq 0}$. Then, we construct the same graph as above with $\alpha$ additional self-loops on each vertex. The proof generalizes.
    %Then, the same proof strategy used to prove \Cref{claim:reduction-correctness} can be used to show correctness of the modified reduction.
\end{proof}

\section{Bicriteria Approximations}\label{sec:bicriteria-algorithm}
In this section, we design bicriteria approximation algorithms for \densitydeletionset and \supmoddensitydeletionset.

\subsection{Bicriteria for \densitydeletionset}\label{sec:orientation-bicriteria}

In this section, we extend the ideas of \cite{chandrasekaran2024polyhedralaspectsfeedbackvertex} to obtain an ILP formulation for \densitydeletionset. We then give a simple threshold-rounding algorithm for its LP relaxation to prove Theorem \ref{thm:orientation-LP-bicriteria}. 

\subsubsection{Orientation ILP}
% Our ILP for \rhodensitydeletionset is based on the connection between density and fractional orientations given in  \Cref{thm:Charikar-fraction-orientation} \knote{broken reference}---a graph $G$ has density at most $\rho$
% iff the edges of $G$ can be fractionally oriented such that the total fractional in-degree at
% every vertex is at most $\rho$. 

Our ILP for \rhodensitydeletionset is based on the characterization of $\lambda^*_G$ via fractional orientations given in \Cref{thm:orientation-characterization}. We recall that fractional orientation is an assignment of
two non-negative numbers $z_{e,u}$ and $z_{e,v}$ for each edge $\{u,v\} \in E$ such that 
$z_{e, u}+ z_{e,v}=1$, and
\Cref{thm:orientation-characterization}(1) states that a graph $G$ has density at most $\rho$
if and only if the edges of $G$ can be fractionally oriented such that the total fractional in-degree at every vertex is at most $\rho$.

We describe the details of our formulation now. For an edge $e=uv$ we use variables $z_{e,u}$
and $z_{e,v}$ to denote the fractional amount of $e$ that is oriented towards $u$ and $v$
respectively. Since the \rhodensitydeletionset is a vertex deletion problem, we also have variables
$x_u$ for each $u \in V$ to indicate whether $u$ is deleted. An edge $e=uv$ is in
the residual graph only if $u$ and $v$ are not deleted. These observations
allow us to formulate the ILP for \rhodensitydeletionset below.
\begin{center}
$\begin{array}{ll@{}ll}
        \text{min}  &\displaystyle\sum\limits_{u \in V} c_ux_{u}&   &\\
       % &&&\\ % skip a line
        \text{s.t.}& x_u + x_v + z_{e, u} + z_{e, v} \geq 1 & & \forall e = uv \in E\\
        &\rho x_u + \displaystyle\sum\limits_{e\in \delta(u)}z_{e, u} \leq \rho &    & \forall u \in V \\
        &z_{e,u} \geq 0&&\forall u \in V, \forall e \in \delta(u)\\
        &x_u \in \{0, 1\}&&\forall u \in V
    \end{array}$ 
\end{center}
We will denote the LP-relaxation of the above ILP for the instance $(G, c, \rho)$ as $\text{LP}_{\text{orient}}(G, c, \rho)$.
\subsubsection{Rounding the Orientation LP}\label{sec:rounding-orientation-LP}

We give our LP-rounding based bicriteria algorithm for \rhodensitydeletionset and prove \Cref{thm:orientation-LP-bicriteria}. In particular, we consider the simple threshold-rounding based algorithm for the orientation LP given in \Cref{alg:bicriteria}. Let $S$ denote the set returned by \Cref{alg:bicriteria}. \Cref{lem:bicriteria-LP-rounding:cost} shows that the cost of the set $S$ is at most $1/\epsilon$ times the cost of an optimum solution and thus satisfies property (2) of the theorem. \Cref{lem:bicriteria-LP-rounding:approximate-feasibility} shows that the density of the graph $G - S$ is at most $\rho/(1-2\epsilon)$ and so the set $S$ satisfies property (1) of the theorem. \Cref{alg:bicriteria}, \Cref{lem:bicriteria-LP-rounding:cost} and \Cref{lem:bicriteria-LP-rounding:approximate-feasibility} together complete the proof of \Cref{thm:orientation-LP-bicriteria}.

\begin{algorithm}
\caption{Bicriteria approximation algorithm for \densitydeletionset}\label{alg:bicriteria}
\textbf{Input:} (1) Graph $G=(V, E)$, (2) Costs $c:V\rightarrow\mathbb{R_{+}}$, (3) Target $\rho \in \Z_+$, (4) Error parameter $\epsilon \in (0, 1/2)$

\begin{enumerate}
    \item Let $(x, z)$ be an optimal solution to $\text{LP}_{\text{orient}}(G, c, \rho)$. 
    \item \textbf{return} $S := \{u \in V : x_u > \epsilon\}$.
\end{enumerate}
\end{algorithm}

\begin{lemma}[Approximate Cost] \label{lem:bicriteria-LP-rounding:cost}
$c(S) \leq \frac{1}{\epsilon}\sum_{u \in V}c_ux_u$.
\end{lemma}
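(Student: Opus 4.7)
The plan is to use the standard threshold-rounding argument that exploits the definition of $S$ directly. Since $S = \{u \in V : x_u > \epsilon\}$, every $u \in S$ satisfies $x_u/\epsilon > 1$, so we can artificially scale up the indicator of $u$ being included in $S$ by the fractional value $x_u$.

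Concretely, I would write
\begin{align*}
c(S) = \sum_{u \in S} c_u \;\leq\; \sum_{u \in S} c_u \cdot \frac{x_u}{\epsilon} \;\leq\; \frac{1}{\epsilon}\sum_{u \in V} c_u x_u,
\end{align*}
where the first inequality uses $x_u/\epsilon > 1$ for every $u \in S$ (and $c_u \ge 0$), and the second inequality uses that the summand $c_u x_u$ is nonnegative for every $u \in V$ so extending the sum from $S$ to all of $V$ only increases it.

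There is essentially no obstacle here; the only thing to verify is the nonnegativity assumptions, which hold by hypothesis ($c:V\to\mathbb{R}_{\ge 0}$) and by LP feasibility ($x_u \ge 0$, which is implicit from the LP relaxation of the $\{0,1\}$ constraint). Since $\sum_{u\in V} c_u x_u$ is the LP objective and $\mathtt{OPT}$ is at least this LP value, this also gives $c(S) \leq \mathtt{OPT}/\epsilon$ as needed for the cost bound in Theorem~\ref{thm:orientation-LP-bicriteria}.
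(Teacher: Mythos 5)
Your proof is correct and uses exactly the same two-step threshold-rounding argument as the paper: first bound $c_u \le c_u x_u/\epsilon$ for each $u \in S$ using $x_u > \epsilon$, then extend the sum from $S$ to $V$ using nonnegativity. Nothing further to add.
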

\begin{proof} We have the following.
    $$c(S) = \sum_{u \in S}c_u \leq \frac{1}{\epsilon}\sum_{u \in S}c_ux_u \leq \frac{1}{\epsilon}\sum_{u \in V}c_ux_u,$$
    where the first inequality is by construction of the set $S$.
\end{proof}

\begin{lemma}[Approximate Feasibility] \label{lem:bicriteria-LP-rounding:approximate-feasibility}
    $\lambda^*_{G - S} \leq \rho\cdot \frac{1}{1 - 2\epsilon}$.
\end{lemma}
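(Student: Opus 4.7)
The plan is to certify the density bound on $G-S$ by exhibiting an explicit fractional orientation of $G-S$ whose maximum fractional in-degree is at most $\rho/(1-2\epsilon)$, and then invoke \Cref{thm:orientation-characterization}(1). The natural candidate for this orientation is the renormalization of the LP solution $z$ restricted to the edges of $G - S$.

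More concretely, first I would observe that for every remaining edge $e = uv \in E[V - S]$, both endpoints satisfy $x_u, x_v \leq \epsilon$ by the definition of $S$. Then the ILP's edge constraint $x_u + x_v + z_{e,u} + z_{e,v} \geq 1$ yields
\[
z_{e,u} + z_{e,v} \;\geq\; 1 - x_u - x_v \;\geq\; 1 - 2\epsilon.
\]
This positive lower bound is the whole reason the rounding works and is the main structural observation to emphasize. I would then define a fractional orientation $z'$ of $G-S$ by renormalization, namely
\[
z'_{e,u} \;:=\; \frac{z_{e,u}}{z_{e,u} + z_{e,v}}, \qquad z'_{e,v} \;:=\; \frac{z_{e,v}}{z_{e,u} + z_{e,v}}
\]
for each $e = uv \in E[V-S]$. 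By construction $z'_{e,u}, z'_{e,v} \geq 0$ and $z'_{e,u} + z'_{e,v} = 1$, so $z'$ is a legitimate fractional orientation of $G - S$.

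Next I would bound the fractional in-degree at each surviving vertex $u \in V - S$. Using the bound above on the denominator we have
\[
z'_{e,u} \;\leq\; \frac{z_{e,u}}{1 - 2\epsilon}
\]
for every $e \in \delta_{G-S}(u)$. Summing over all edges of $G-S$ incident to $u$ and applying the ILP's degree constraint $\rho x_u + \sum_{e \in \delta(u)} z_{e,u} \leq \rho$ (which implies $\sum_{e \in \delta_{G-S}(u)} z_{e,u} \leq \sum_{e \in \delta(u)} z_{e,u} \leq \rho(1 - x_u) \leq \rho$), we get
\[
\sum_{e \in \delta_{G-S}(u)} z'_{e,u} \;\leq\; \frac{1}{1 - 2\epsilon} \sum_{e \in \delta_{G-S}(u)} z_{e,u} \;\leq\; \frac{\rho}{1 - 2\epsilon}.
\]
By \Cref{thm:orientation-characterization}(1) this certifies $\lambda^*_{G - S} \leq \rho/(1 - 2\epsilon)$, as required.

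There is no real obstacle here; the argument is a direct application of the orientation characterization and the LP constraints, and the only substantive point is that the thresholding $x_u > \epsilon$ guarantees $z_{e,u} + z_{e,v} \geq 1 - 2\epsilon > 0$ on every surviving edge, which is exactly what lets us renormalize without blow-up. The chosen threshold $\epsilon < 1/2$ ensures the denominator $1 - 2\epsilon$ is strictly positive so that the renormalization and the final in-degree bound are well-defined.
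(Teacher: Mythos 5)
Your proof is correct and follows essentially the same route as the paper: renormalize the LP solution $z$ on the surviving edges, using the key observation that $z_{e,u}+z_{e,v}\geq 1-2\epsilon$ for every $e\in E[V-S]$, then invoke \Cref{thm:orientation-characterization}(1). The only cosmetic difference is that you normalize per edge by $z_{e,u}+z_{e,v}$ (yielding a fractional orientation with edge sums exactly $1$), whereas the paper scales uniformly by $1/(1-2\epsilon)$ (yielding edge sums $\geq 1$, which can be trimmed to a proper orientation); both yield the same in-degree bound.
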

\begin{proof}
    We will show the claim by exhibiting a fractional orientation of the graph $G-S$ such that the indegree of every vertex is at most $\rho\cdot \frac{1}{1 - 2\epsilon}$. This suffices to prove the lemma via the characterization in  \Cref{thm:orientation-characterization}(1). 
    Let $z'_{e, u} := \frac{1}{1 - 2\epsilon}\cdot z_{e,u}$ for all $e = uv \in E[V - S]$. %The next two observations together show that $z'$ is the desired orientation. 
    The first observation below shows that $z'$ is an orientation. The second observation below shows that that the indegree of every vertex is bounded.
    These two observations together show that $z'$ is the desired orientation. 
    \end{proof}
    \begin{claim}
        $z'_{e, u} + z'_{e,v} \geq 1$ for all $e = uv \in E[V - S]$.
    \end{claim}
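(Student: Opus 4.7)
The plan is to use the LP constraint $x_u + x_v + z_{e,u} + z_{e,v} \geq 1$ together with the defining property of $S$ to control $x_u + x_v$ from above on edges with both endpoints in $V - S$. Since $S = \{u \in V : x_u > \epsilon\}$, any vertex $w \in V - S$ satisfies $x_w \leq \epsilon$.

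Concretely, fix an edge $e = uv \in E[V - S]$. Since both $u, v \notin S$, we have $x_u \leq \epsilon$ and $x_v \leq \epsilon$, so $x_u + x_v \leq 2\epsilon$. Combining with the LP constraint yields
\begin{equation*}
z_{e,u} + z_{e,v} \;\geq\; 1 - (x_u + x_v) \;\geq\; 1 - 2\epsilon.
\end{equation*}
Multiplying both sides by $\tfrac{1}{1-2\epsilon}$ (which is positive since $\epsilon \in (0, 1/2)$) gives
\begin{equation*}
z'_{e,u} + z'_{e,v} \;=\; \tfrac{1}{1-2\epsilon}\bigl(z_{e,u} + z_{e,v}\bigr) \;\geq\; 1,
\end{equation*}
as required.

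There is no real obstacle here; the only thing to verify is that the denominator $1 - 2\epsilon$ is strictly positive, which is guaranteed by the hypothesis $\epsilon \in (0, 1/2)$ stated in \Cref{alg:bicriteria} and \Cref{thm:orientation-LP-bicriteria}. Non-negativity of $z'$ follows immediately from non-negativity of $z$ in the LP.
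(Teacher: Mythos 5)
Your proof is correct and follows essentially the same approach as the paper: use the LP constraint $x_u + x_v + z_{e,u} + z_{e,v} \geq 1$, bound $x_u + x_v \leq 2\epsilon$ from the defining condition of $S$, and divide by $1-2\epsilon$. The paper merely chains these inequalities in one line and rearranges, whereas you spell out the steps; the argument is identical.
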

    \begin{proof}
        Let $e = uv \in E[V - S]$ be arbitrary. Then, we have the following:
        $$1 \leq x_u + x_v + z_{e, u} + z_{e, v} \leq 2\epsilon + z'_{e,u}(1-2\epsilon) + z'_{e,v}(1 - 2\epsilon),$$
        where the first inequality is by the $\text{LP}_{\text{orient}}(G,c,\rho)$ constraint (1) and the second inequality is because $u, v \not \in S$. Then, rearranging the terms gives us the observation.
    \end{proof}

    \begin{claim}
        $\sum_{e \in \delta_{V - S}(u)}z'_{e, u} \leq \rho\cdot\frac{1}{1-2\epsilon}$.
    \end{claim}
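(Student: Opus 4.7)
The plan is to unfold the definition of $z'$ and then apply the second type of constraint from $\text{LP}_{\text{orient}}(G,c,\rho)$. Let $u \in V - S$ be arbitrary (if $u \in S$ the claim is vacuous since we only orient edges of $E[V-S]$). By construction of $z'$, we have
\[
\sum_{e \in \delta_{V-S}(u)} z'_{e,u} \;=\; \frac{1}{1-2\epsilon}\sum_{e \in \delta_{V-S}(u)} z_{e,u} \;\le\; \frac{1}{1-2\epsilon}\sum_{e \in \delta(u)} z_{e,u},
\]
where the inequality follows from the nonnegativity constraint $z_{e,u}\ge 0$ together with $\delta_{V-S}(u) \subseteq \delta(u)$.

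Next I would invoke the capacity constraint of the LP at vertex $u$, namely $\rho x_u + \sum_{e\in\delta(u)} z_{e,u} \le \rho$, which rearranges to $\sum_{e\in\delta(u)} z_{e,u} \le \rho(1-x_u) \le \rho$. Chaining this with the previous display yields
\[
\sum_{e \in \delta_{V-S}(u)} z'_{e,u} \;\le\; \frac{\rho}{1-2\epsilon},
\]
which is exactly the claim. Combined with the previous claim that $z'_{e,u}+z'_{e,v}\ge 1$ on every surviving edge, this produces a fractional orientation of $G-S$ in which every vertex has fractional in-degree at most $\rho/(1-2\epsilon)$, and hence $\lambda^*_{G-S}\le \rho/(1-2\epsilon)$ by \Cref{thm:orientation-characterization}(1).

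There is no real obstacle here: the LP was designed precisely so that the in-degree constraint survives threshold rounding up to the slack factor $1/(1-2\epsilon)$. The only subtlety worth flagging is that we restrict the sum to $\delta_{V-S}(u)$ before comparing to $\delta(u)$; this step is justified by nonnegativity of the $z$-variables and is why we may discard the incident edges that got removed when their other endpoint was placed in $S$.
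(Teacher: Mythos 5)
Your proof is correct and follows essentially the same route as the paper's: unfold the definition of $z'$, extend the sum from $\delta_{V-S}(u)$ to $\delta(u)$ via non-negativity of $z$, invoke the degree constraint $\rho x_u + \sum_{e\in\delta(u)} z_{e,u}\le\rho$, and conclude. If anything, your write-up is slightly cleaner than the paper's, which states the chain of inequalities in terms of the unscaled variables (and has a harmless typo writing $y$ for $z$) before appealing to the scaling factor at the end.
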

    \begin{proof}
        Let $u \in V - S$ be arbitrary. We have the following:
        $$\sum_{e \in \delta_{V - S}(u)}y_{e, u} \leq \sum_{e \in \delta(u)}y_{e, u} \leq \rho(1 - x_u) \leq \rho.$$
            Here, the second inequality is by the $\text{LP}_{\text{orient}}(G,c,\rho)$ constraint (2) and the third inequality is because $x_u$ is non-negative by the $\text{LP}_{\text{orient}}(G,c,\rho)$ constraint (3). The observation then follows because $\epsilon \in (0, 1)$.
    \end{proof}

\subsection{Bicriteria for \supmoddensitydeletionset}\label{sec:supmodDD-bicriteria}

In this section, we describe a randomized combinatorial bicriteria approximation algorithm for $\rho$-\supmoddensitydeletionset and prove Theorem \ref{thm:bicriteria-random-deletion}. 
 The algorithm is inspired by the ideas of the recent work of Włodarczyk \cite{Włodarczyk_2024}.
Our algorithm is based on the following  idea.  Suppose that we had 
non-negative \emph{potentials} $\pi:V\rightarrow\R_{\geq 0}$ for the elements of the ground set   
such that the potential value $\sum_{u\in X}\pi(u)$ of an optimal solution $X \subseteq V$ is large, say at least $\alpha\cdot\sum_{u \in V}\pi(u)$.
Then, a natural algorithm---at least when the vertex deletion costs are uniform---would be to 
compute the potentials, 
sample an element in proportion to the potentials and delete it, define a residual instance, and repeat. 
This would ensure an $\alpha$-approximation for the problem in expectation via a martingale argument.

Unfortunately, our hardness result suggests that we are unlikely to obtain good potentials. In fact, the hard instances seem to be the functions that have density very close to the target density. However, we leverage supermodularity to show that if the density of the input function is at least $\beta$ times the target density, then we can indeed find such good vertex potentials. 
In order to ensure that the density of the input function is at least $\beta$ times the target density, we perform a preprocessing step to prune certain elements from the ground set without changing the cost of an optimal solution. 
Overall, this gives us an $(\alpha, \beta)$-bicriteria guarantee, where the values $\alpha$ and $\beta$ are as given in \Cref{thm:bicriteria-random-deletion}. We also note that the cost function to delete vertices may be arbitrary---we overcome this by using the natural bang-per-buck sampling strategy, i.e. we sample a vertex $u$ in proportion to $\pi(u)/c(u)$.

The rest of the section is organized as follows. 
In \Cref{sec:dense-decomposition} we give our preprocessing step based on the dense decomposition of supermodular functions. 
In \Cref{sec:element-potentials} we describe our element potentials based on marginal gains of supermodular functions. 
In \Cref{sec:bicriteria-random-deletion:algorithm} we present our algorithm and complete the proof of \Cref{thm:bicriteria-random-deletion}.

\subsubsection{Preprocessing via Dense Decomposition}\label{sec:dense-decomposition}
We discuss the \emph{dense decomposition} of a normalized non-negative supermodular set function \cite{hqc-22, Fujishige_1980} and prove a lemma that will enable us to use it as a preprocessing step. 

\begin{definition}\cite{hqc-22}
    Let $f:2^V\rightarrow\R_+$ be a non-negative normalized supermodular function. A sequence $(V_1, \rho_1), (V_2, \rho_2), \ldots, (V_k, \rho_k)$ is the \emph{dense decomposition of $f$} if 
    \begin{enumerate}
        \item $V_1, \ldots, V_k$ is a partition of $V$ obtained iteratively as follows: for $i =1, 2, \ldots, k$, 
        $V_i$ is the inclusion-wise maximal set $S\subseteq V-\cup_{j\in [i-1]}V_j$ that maximizes 
        \[
            \frac{f\left(S \cup \bigcup_{j \in [i-1]}V_j\right) - f\left(\bigcup_{j \in [i-1]}V_j\right)}{|S|}
        \]
   
    \item the values $\rho_1, \ldots, \rho_k$ are obtained as
    $$\rho_i := \frac{f\left(\bigcup_{j \in [i]} V_j\right) - f\left(\bigcup_{j \in [i-1]} V_j\right)}{|V_i|}\ \forall\ i \in [k].$$
    \end{enumerate}
\end{definition}

We note that the dense decomposition can also be viewed algorithmically as the output of a recursive process which computes the unique inclusion-wise maximal set $S$ that maximizes the ratio $f(S)/|S|$ and recurses on the \emph{contracted} function $f_{/S}:2^{V - S}\rightarrow\R$ defined as $f_{/S}(X) := f(S\cup X)$ for all $X \subseteq V - S$. It can be shown that this decomposition is unique for a supermodular $f$.
The following lemma allows us to use the dense decomposition as an algorithmic preprocessing step in the next section. In particular, the lemma says that the dense decomposition can be used to find a set $R\subseteq V$ such that it suffices to focus on solving the \rhosdds{\rho} problem on the function restriction $f|_R$; and additionally, the ground set elements of the restricted function have \emph{large} marginal gains. 

\begin{restatable}{lemma}{lemDenseDecomposition}\label{lem:dense-decomposition}
    Let $f:2^V\rightarrow\R_{\geq 0}$ be a  normalized non-negative supermodular function, $c:V\rightarrow\R_{+}$ be a cost function,  and $\rho \in \R_+$ be a positive real value. Moreover, let $(V_1, \phi_1), (V_2, \phi_2), \ldots, (V_k, \phi_k)$ be the dense decomposition of $(f, V)$ and for $\rho'>\rho$, let $R := \bigcup_{i \in [k] : \phi_i > \rho'}V_i$. Then, we have that 
    \begin{enumerate}
        \item every feasible solution to \rhosdds{\rho'} for the function $f|_R$ is also a feasible solution to \rhosdds{\rho'} for the function  $f$,
        \item $\OPT(f) \geq  \OPT(f|_R)$,
        %$ \geq \OPT_{\rho'}(f|_{R}) = \OPT_{\rho'}(f)$
         where $\OPT(f)$and $ \OPT(f|_R)$ denote the costs of optimal solutions to $\rhosdds{\rho}$ for the functions $f$ and $f|_R$ respectively,
        \item $f(v|R-v) \geq \rho'$ for all $v \in R$, and  
        \item the set $R$ can be computed in polynomial time given access to a function evaluation oracle for $f$.
    \end{enumerate}
\end{restatable}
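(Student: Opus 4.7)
The plan is to leverage two fundamental properties of the dense decomposition. Let $j^{*}$ be the largest index with $\phi_{j^{*}} > \rho'$, so $R = V_{1}\cup\cdots\cup V_{j^{*}}$; if $j^{*}=k$ then $R=V$ and the lemma is immediate, so assume $j^{*}<k$, giving $\phi_{j^{*}+1} \le \rho'$. The first key property is that the contracted function $f_{/R}(S) := f(R\cup S) - f(R)$ on $V\setminus R$ satisfies $\lambda^{*}_{f_{/R}} = \phi_{j^{*}+1}\le \rho'$, which follows directly from the recursive characterization of the decomposition applied to $f_{/R}$. The second key property is a standard fact about the maximum-density subset of a supermodular function: for any $v\in V_{i}$, writing $U := V_{1}\cup\cdots\cup V_{i-1}$, one has $f(v\mid U \cup V_{i} - v) \ge \phi_{i}$; otherwise $V_{i}-v$ would have strictly higher marginal density over $U$, contradicting the fact that $V_{i}$ achieves the maximum $\phi_{i}$.

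For part 1, fix a feasible solution $S\subseteq R$ to \rhosdds{\rho'} on $f|_{R}$ and an arbitrary $T\subseteq V-S$, and decompose $T = T_{R}\cup T_{\bar R}$ with $T_{R} := T\cap R$ and $T_{\bar R} := T\setminus R$. Apply supermodularity of $f$ to the pair $(R, T)$: from $f(R) + f(T) \le f(R\cap T) + f(R\cup T)$, and noting $R\cap T = T_{R}$ and $R\cup T = R\cup T_{\bar R}$, we get
\[
f(T) \le f(T_{R}) + \bigl(f(R\cup T_{\bar R}) - f(R)\bigr).
\]
The first term is at most $\rho' |T_{R}|$ because $T_{R}\subseteq R - S$ and $S$ is feasible for $f|_{R}$; the second is at most $\rho' |T_{\bar R}|$ by the contracted-function density bound. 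Summing gives $f(T)\le \rho' |T|$ for all such $T$, hence $\lambda^{*}_{f|_{V-S}}\le \rho'$.

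For part 2, if $S^{*}$ is optimal for \rhosdds{\rho} on $f$, then $S^{*}\cap R$ is feasible for \rhosdds{\rho} on $f|_{R}$, since any $U\subseteq R\setminus S^{*}$ is also a subset of $V\setminus S^{*}$ and therefore has $f(U)/|U|\le \rho$; its cost is at most $c(S^{*})$, giving $\OPT(f|_R) \le \OPT(f)$. For part 3, for $v\in V_{i}\subseteq R$ with $U := V_{1}\cup\cdots\cup V_{i-1}$, note that $U \cup V_{i} - v\subseteq R - v$; the non-decreasing marginals property of supermodular functions gives $f(v\mid R-v) \ge f(v\mid U \cup V_{i} - v) \ge \phi_{i} > \rho'$, using the second key property. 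Part 4 follows because $R$ can be built by iteratively computing the unique inclusion-wise maximal max-density subset of the current contracted supermodular function (via polynomially many submodular-function minimizations), halting as soon as the computed density drops to $\le \rho'$.

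The main subtlety is in part 1: a direct attempt to bound $f(T)$ by $f(T_{R}) + f(T_{\bar R})$ through supermodularity goes the \emph{wrong} way, since supermodularity gives $f(T_{R}) + f(T_{\bar R})\le f(T)$. Bringing the set $R$ explicitly into the supermodular inequality is exactly what couples the feasibility bound on $R - S$ with the contracted-function bound on $V\setminus R$, and is the essential idea of the proof.
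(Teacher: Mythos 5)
Your proof is correct and follows essentially the same approach as the paper's: part 1 hinges on applying supermodularity to the pair $(R,T)$ and combining the feasibility bound on $R-S$ with the contracted-function density bound on $V\setminus R$ (the paper phrases this as a contradiction via the mediant inequality, while you bound the two terms directly, but the algebraic content is identical), and parts 2--4 are the same arguments. The only minor point worth flagging is that your ``standard fact'' $f(v\mid U\cup V_i - v)\ge \phi_i$ should note the degenerate case $|V_i|=1$ (where $V_i-v=\emptyset$ and the fact holds with equality), which the paper handles explicitly.
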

\begin{proof}
    We show all four properties separately below.
    \begin{enumerate}
        \item 
    Let $X \subseteq R$ be a feasible solution to \rhosdds{\rho'} for the function $f|_{R}$ and by way of contradiction suppose that $X$ is not a feasible solution to \rhosdds{\rho'} for the function $f$. Thus, there exists a set $S \subseteq V - X$ such that $f(S)/|S| > \rho'$. We note that this set $S$ cannot be contained in $R - X$ since otherwise $X$ would not be a feasible solution to \rhosdds{\rho'} for $f|_{R}$. Then, the following gives us the required contradiction:
    \begin{align*}
        \rho' < \frac{f(S)}{|S|}  & \leq \frac{f(S\cup R) - f(R) + f(S\cap R)}{|S - R| + |S\cap R|}\\
        & \leq \max\left\{\frac{f(S\cup R) - f(R)}{|S - R|}, \frac{f(S\cap R)}{|S\cap R|}\right\}\\
        &\leq \max\left\{\max\left\{\frac{f(R\cup S') - f(R)}{|S'|} : S'\subseteq V - R\right\}, 
        \lambda_{f|_{R - X}}^*
        % \density(f|_{R - X}
        )\right\}\\
        & \leq \rho'.
    \end{align*}
    Here, the second inequality is by supermodularity of the function $f$. The third inequality is by the observation that $(a+b)/(c+d) \leq \max\{a/c, b/d\}$ for non-negative numbers $a,b,c,d$. For the final inequality, observe that 
    $\lambda_{f|_{R - X}}^* \leq \rho'$
    because $X$ is a feasible \rhosdds{\rho'} for $f|_{R}$. Furthermore, we have that $\max\left\{\frac{f(R\cup S') - f(R)}{|S'|} : S'\subseteq V - R\right\} \leq \rho'$ by definition of the dense decomposition and $R$.

    \item Let $X\subseteq V$ be an optimal $\rhosdds{\rho}$ for $f$ w.r.t. cost function $c$. Then, we note that $X\cap R$ is a feasible $\rhosdds{\rho}$ for $f|_{R}$. This can be easily observed as follows: by way of contradiction, suppose that $X\cap R$ is not a feasible $\rhosdds{\rho}$ for $f|_{R}$. Then, there exists a set $S \subseteq R - X$ such that $f|_{R}(S)/|S| > \rho$. Consequently, we have that $\lambda^*_{f|_{V - X}} \geq f(S)/|S| =  f|_{R}(S)/|S| > \rho$, a contradiction to $X$ being a feasible \rhosdds{\rho} for $f$. Then, $\OPT(f) \geq \OPT(f|_R)$ follows by non-negativity of $c$. 
        \item By way of contradiction, suppose that there exists a vertex $v \in R$ such that $f(v|R) \leq \rho'$. We recall that $R = \bigcup_{i \in [k] : \phi_i > \rho'}V_i$. Let $j \in [k]$ be such that $v \in V_j$.
        For convenience, we will let $U_{j-1} := \bigcup_{i \in [j-1]}V_i$ and $U_{j} =  \bigcup_{i \in [j]}V_i$. We note that $U_{j-1}$ and $U_j$ are contained in $R$ and $\rho_j := \frac{f(U_j) - f(U_{j-1})}{|V_j|}$ by definition of the dense decomposition. Moreover, $\rho_j > \rho'$ by the definition of the set $R$, and so by supermodularity we have that $f(v|U_j) \leq f(v|R) \leq \rho' < \rho_j$. Then, the following sequence of inequalities gives us the required contradiction.
        \begin{align*}
            \rho_j & = \frac{f(U_j) - f(U_{j-1})}{|V_j|}\\
            & = \frac{f(U_j) - f(U_j - v) + f(U_j - v) -  f(U_{j-1})}{|V_j| }\\
            & = \frac{f(v|U_j - v) + f(U_j - v) -  f(U_{j-1})}{1 + (|V_j| - 1)}\\
            & < \frac{\rho_j + f(U_j - v) -  f(U_{j-1})}{1 + (|V_j| - 1)}\\
            & \leq  \max\left\{\rho_j, \frac{f(U_j - v) -  f(U_{j-1})}{|V_j| - 1}\right\}\\
            & \leq \max\left\{\rho_j, \max\left\{\frac{f(U_{j-1} \cup S) - f(U_{j-1})}{|S|} : S\subseteq V - U_{j-1} \right\}\right\}\\
            & \leq \rho_j,
        \end{align*}
        where the second inequality is by the observation that $(a+b)/(c+d) \leq \max\{a/c, b/d\}$ for non-negative numbers $a,b,c,d$. Here, we note that $|V_j| \geq 2$ since otherwise, $V_j = \{v\}$ and so by supermodularity we have that $\rho_j = f(U_{j-1} + v) - f(U_{j-1}) = f(v |U_{j-1}) \leq f(v|R) < \rho_j$, 
        %\knote{$\rho_j = f(U_{j-1} + v) - f(U_{j-1}) = f(v |U_{j-1}) \leq f(v|R) < \rho_j$}\shubhang{corrected.}, 
        a contradiction.
        \item It is well-known that the dense decomposition (and consequently the set $R$) can be computed in polynomial time (given access to the function evaluation oracle for  $f$) via supermodular maximization. This is implicit in Fujishige's work on principle partitions \cite{Fujishige_1980}, and is also explicitly considered in more recent works on dense decompositions for supermodular functions \cite{hqc-22}. We omit the details of a formal proof here for brevity. 
    \end{enumerate}
\end{proof}

%\knote{Multiple macros for the same problem. Make it consistent.}\shubhang{todo}

\subsubsection{Element Potentials via Marginal Gains}\label{sec:element-potentials}
We now show that the marginal gains of elements relative to the entire ground set are good potentials for a sampling-based algorithm for $\rho$-\supmoddensitydeletionset when all the marginal gains of the input function are large enough, in particular, at least $c_f(1+\epsilon)\rho$. 

\begin{restatable}{lemma}{lemBicritHelper}\label{lem:bicriteria-random-deletion-helper:main}
Let $\rho\in \R$ and $\epsilon \in (0, 1)$. Let $f:2^V\rightarrow\Z$ be a normalized monotone supermodular function such that $f(u|V - u) \geq c_f(1+\epsilon)\rho$ for all $u \in V$, and let $X \subseteq V$ be a \rhosdds{\rho} for $f$. Then, we have that
$$\sum_{u \in X} f(u|V-u) \geq \frac{1}{c_f (1 + 1/\epsilon)}\sum_{u \in V}f(u|V-u).$$
\end{restatable}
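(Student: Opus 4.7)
The plan is to rewrite the target inequality as an upper bound on $T := \sum_{u \in V} f(u|V-u)$ in terms of $A := \sum_{u \in X} f(u|V-u)$, since the claim is equivalent to $T \leq c_f(1+1/\epsilon)\, A$. The derivation combines three ingredients: the definition of $c_f$, the feasibility of $X$, and the uniform marginal lower bound $f(u|V-u) \geq c_f(1+\epsilon)\rho$ on all $u \in V$.

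First, applying the definition of $c_f$ with $S = V$ yields $T \leq c_f\, f(V)$. Next, I would decompose $f(V) = f(V-X) + f(X|V-X)$ and bound each piece. Feasibility of $X$ gives $f(V-X) \leq \rho\, |V-X|$ (the case $V - X = \emptyset$ is immediate from normalization). For the second piece, supermodularity implies the telescoping bound $f(X|V-X) \leq A$. To see this, fix any ordering $u_1, \ldots, u_k$ of $X$: the $i$-th increment is $f(u_i \,|\, (V-X) \cup \{u_1, \ldots, u_{i-1}\})$, and since this context set is a subset of $V - u_i$, the fact that supermodular marginals are monotone nondecreasing in the context set bounds this increment by $f(u_i | V - u_i)$. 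Summing over $i$ gives the claim. Together these produce
\[
T \;\leq\; c_f\, \rho\, |V-X| \;+\; c_f\, A.
\]

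The third ingredient bounds $|V-X|$ using the marginal lower bound on elements of $V-X$: summing $f(u|V-u) \geq c_f(1+\epsilon)\rho$ over $u \in V-X$ yields $|V-X| \leq (T-A)/(c_f(1+\epsilon)\rho)$. Substituting into the inequality above yields $T \leq \frac{T - A}{1+\epsilon} + c_f\, A$. A short rearrangement gives
\[
T \;\leq\; \left(c_f + \tfrac{c_f - 1}{\epsilon}\right) A \;\leq\; c_f\!\left(1 + \tfrac{1}{\epsilon}\right) A,
\]
which is exactly the claim.

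The main conceptual step is the supermodular telescoping bound $f(X|V-X) \leq A$: its direction relies on supermodular marginals being monotone nondecreasing in the context (the opposite of the submodular case), and this is precisely what couples the ``joint'' marginal $f(X|V-X)$ to the sum of the individual vertex marginals that appear on the right-hand side of the lemma. Once this identity and the $c_f$-bound on $T$ are in place, the remainder is routine algebra driven by feasibility and the marginal lower bound.
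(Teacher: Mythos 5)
Your proof is correct and uses exactly the same three ingredients as the paper's: the bound $\sum_{u\in V}f(u|V-u) \le c_f f(V)$ from the definition of $c_f$, the supermodular bound $f(V)-f(V-X)\le \sum_{u\in X}f(u|V-u)$, and feasibility giving $f(V-X)\le \rho|V-X|$. The only differences are cosmetic: you argue directly and bound $|V-X|$ by summing the marginal lower bound over $V-X$ alone (yielding the marginally sharper intermediate constant $c_f + (c_f-1)/\epsilon$), whereas the paper argues by contradiction and sums the lower bound over all of $V$ to reach $|V-X|>|V|$.
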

\begin{proof}
By supermodularity of $f$, we have that 
\begin{align}
\sum_{u\in X}f(V-u)&\le (|X|-1)f(V) + f(V-X) \text{ and hence, }\notag\\
\sum_{u\in X}f(u|V-u) &\ge f(V) -f(V-X). \label{eq:sum-of-opt-marginals}
\end{align}
By way of contradiction, suppose that the lemma is false. Then, we have the following.
    \begin{align*}
        \sum_{u \in V}f(u | V-u)& \leq c_{f} f(V)&\\
        &= c_{f}(f(V) - f(V - X) + f(V - X))&\\
        &\leq c_f\left(f(V) - f(V - X)\right) + c_f\rho|V - X|&\\
        &\leq c_f\sum_{u \in X}f(u|V-u) + c_f\rho|V - X|&\\
        &< \frac{1}{1 + 1/\epsilon}\sum_{u \in V}f(u|V-u) + c_f \rho|V - X|.&
    \end{align*}
    Here, the first inequality is by definition of the parameter $c_f$. The second inequality is because $X$ is a feasible \rhosdds{\rho} for the function $f$. The third inequality is by \eqref{eq:sum-of-opt-marginals}. 
    %\Cref{claim:bicriteria-random-deletion:helper} below. 
    The final inequality is by our contradiction assumption.
    Then, on rearranging the terms, we obtain the following contradiction.
    $$c_f\rho|V - X| > \left(1 - \frac{1}{1+ 1/\epsilon}\right)\sum_{u \in V}f(u|V-u) = \frac{1}{(1+\epsilon)}\sum_{u \in V}f(u|V-u) \geq c_f\rho|V|,$$
   where the final inequality is because $f(u|V-u) \geq c_f(1+\epsilon)\rho$ for all $u \in V$.
   \end{proof}

\subsubsection{Random Deletion Algorithm}\label{sec:bicriteria-random-deletion:algorithm}
We now describe our bicriteria algorithm for $\rho$-\supmoddensitydeletionset and analyze its approximation factor. \Cref{alg:bicriteria-iterative-random-deletion}, \Cref{lem:bicriteria-random-deletion:approximate-feasibility} and \Cref{lem:bicriteria-random-deletion:expected-cost}
together complete the proof of \Cref{thm:bicriteria-random-deletion}.

\paragraph{Algorithm.} Our algorithm takes as input (1) a normalized non-negative supermodular function $f:2^V\rightarrow\R_+$, (2) element deletion costs $c:V\rightarrow\R_+$, (3) target density $\rho \in \R_+$, and (4) error parameter $\epsilon > 0$. The algorithm returns a set $S \subseteq V$ which starts off as the empty-set and is then constructed element-by-element. This is done iteratively as follows. Let $\beta:=c_f (1+\epsilon)$. If the function $f$ has density at most $\beta\rho$, then the algorithm breaks and returns the current set $S$. Otherwise, the algorithm first computes the dense decomposition $(V_1, \phi_1), (V_2, \phi_2), \ldots, (V_k, \phi_k)$ of  the function $f$, defines the set $R := \cup_{i \in [k] : \phi_i > \beta\rho}V_i$, and redefines the function $f$ to be the restricted function $f|_R:2^R \rightarrow\R_{\geq 0}$---we use \textsc{DenseDecompositionPreprocess}$(f, \rho)$ to denote a subroutine that computes the set $R$ and returns the tuple $(f|_{R}, R)$. Next, the algorithms samples a random element $u$ from the (modified) set $V$ in proportion to the ratio $f(u|V-u)/c(u)$. The algorithm then adds the vertex $u$ to the set $S$, restricts $f$ to the ground set $V - u$, and repeats the previous steps. We give a formal description of the algorithm in \Cref{alg:bicriteria-iterative-random-deletion}.
%\knote{What if the cheapest solution chooses to delete lot of elements in $V-R$ due to their cost, but the algorithm focuses on elements in $R$? Don't we need a statement saying that the cheapest solution will not delete elements from $V-R$? Lemma \ref{lem:dense-decomposition}(1) is insufficient for this purpose.} \shubhang{Added an additional statement Lemma \ref{lem:dense-decomposition}(2).}

\begin{algorithm}
\caption{Bicriteria approximation algorithm for $\rho$-\supmoddensitydeletionset}\label{alg:bicriteria-iterative-random-deletion}
\textsc{Algorithm}$\left(\left(f:2^V\rightarrow\R ,c\right), \rho, \epsilon\right)$:
\begin{enumerate}
\item $S := \emptyset$
% \item \textbf{while} $\density(f) > c_f (1+\epsilon)\rho$:
\item \textbf{while} $\lambda_f^* > c_f (1+\epsilon)\rho$:
\begin{enumerate}[nosep]
        \item Redefine $(f, V) := $ \textsc{DenseDecompositionPreprocess}$(f, c_f (1+\epsilon)\rho)$
        \item $u :=$ vertex sampled from $V$ according to the following distribution: $$\Pr(u = v) := \frac{f(v|V-v)}{c(v)\cdot W} \ \forall v \in V, \text{ where $W := \sum_{v \in V}\frac{f(v|V-v)}{c(v)}$ is a normalizing factor}$$
        \item $S := S+u$ and $f := f|_{V - u}$
    \end{enumerate}
    \item \textbf{return} $S$. 
\end{enumerate}
\end{algorithm}

\paragraph{Martingales. } For the analysis of our randomized algorithm, we will require the following concepts from probability theory.
\begin{definition} 
\begin{enumerate}
    \item 
    A sequence of random variables $P_1, P_2, ...$ is called a \emph{supermartingale} w.r.t. the sequence $X_1, X_2, \ldots $ of random variables if for each $i \in \Z_+$ it holds that (i) $P_i$ is a function of $X_1, \ldots X_i$,
    (ii) $\E[|P_i|] < \infty$  and (iii) $\E[P_{i+1} | X_1, \ldots X_i] \leq P_i$. 
    \item A random variable $T$ is called a \emph{stopping time} with respect to the sequence of random variables $P_1, P_2, ...$  if for each $i \in \Z_+$, the event $(T \leq i)$ depends only on $P_1, \ldots, P_i$.
    \end{enumerate}
\end{definition}

The following result shows that the expected value of a random variable in the supermartingale process only decreases with time. This will be crucial in analyzing the performance of \Cref{alg:bicriteria-iterative-random-deletion}.

\begin{theorem}[Doob’s Optional-Stopping Theorem]\label{thm:doob}
    Let $P_0, P_1, \ldots $ be a supermartingale  w.r.t. the sequence $X_1, X_2, \ldots $ of random variables and $\ell$ be a stopping time with respect to the process $P$. Suppose that $\Pr(\ell \leq n) = 1$ for some integer $n\in \Z_+$. Then, we have that  $\E[P_\ell] \leq \E[P_0]$.
\end{theorem}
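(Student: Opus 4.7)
The plan is to prove the bounded-stopping-time case of Doob's theorem by telescoping $P_\ell - P_0$ increment-by-increment and bounding each term using the supermartingale property together with the measurability of the stopping indicator. Nothing sophisticated is needed beyond linearity of expectation and tower-rule conditioning, since the uniform time bound $\Pr(\ell \le n) = 1$ eliminates the convergence issues that arise in the general statement.

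First I would write, on the almost-sure event $\{\ell \le n\}$, the telescoping identity
\[
P_\ell - P_0 \;=\; \sum_{i=0}^{n-1} (P_{i+1} - P_i)\,\mathbbm{1}\{\ell > i\},
\]
since the $i$-th summand contributes exactly when the process has not yet stopped by time $i$. The key definitional fact is that $\ell$ being a stopping time with respect to $P$ (and hence with respect to $X$, since each $P_j$ is by hypothesis a function of $X_1, \ldots, X_j$) means the event $\{\ell > i\} = \{\ell \le i\}^c$ is a function of $X_1, \ldots, X_i$. Taking expectations, and using that the sum is finite with each $P_i$ integrable, gives
\[
\E[P_\ell] - \E[P_0] \;=\; \sum_{i=0}^{n-1} \E\!\bigl[(P_{i+1} - P_i)\,\mathbbm{1}\{\ell > i\}\bigr].
\]

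Next, for each $i$ I would condition on $X_1, \ldots, X_i$ and pull out the $X_1, \ldots, X_i$-measurable factors $P_i$ and $\mathbbm{1}\{\ell > i\}$:
\[
\E\!\bigl[(P_{i+1} - P_i)\,\mathbbm{1}\{\ell > i\} \,\big|\, X_1, \ldots, X_i\bigr] \;=\; \mathbbm{1}\{\ell > i\}\,\bigl(\E[P_{i+1} \mid X_1, \ldots, X_i] - P_i\bigr) \;\le\; 0,
\]
where the inequality is precisely the supermartingale property $\E[P_{i+1} \mid X_1, \ldots, X_i] \le P_i$, and the indicator $\mathbbm{1}\{\ell > i\}$ is non-negative. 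Taking outer expectations, every summand is non-positive, so the telescoped sum satisfies $\E[P_\ell] - \E[P_0] \le 0$, which is the claim.

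The main ``obstacle'' is purely book-keeping: justifying the interchange of finite sum and expectation (immediate from the integrability assumption $\E[|P_i|] < \infty$), and verifying measurability of $\mathbbm{1}\{\ell > i\}$ with respect to $X_1, \ldots, X_i$ (immediate from the stopping-time definition transferred through the fact that each $P_j$ is $X_{1:j}$-measurable). Because the hypothesis $\Pr(\ell \le n) = 1$ truncates the process at a deterministic horizon, no uniform integrability or dominated/monotone convergence argument is needed; this is the easy special case of Doob's optional-stopping theorem, and the version used in the analysis of \Cref{alg:bicriteria-iterative-random-deletion}.
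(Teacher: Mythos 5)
Your proof is correct, and it is the standard textbook argument for the bounded-stopping-time case of Doob's optional-stopping theorem: telescoping $P_\ell - P_0$ against the indicators $\mathbbm{1}\{\ell > i\}$, observing that each indicator is $X_{1:i}$-measurable (via the stopping-time definition together with the fact that each $P_j$ is a function of $X_{1:j}$), and invoking the supermartingale inequality termwise under the tower rule. The paper itself states this theorem as a known classical result and does not include a proof, so there is nothing to compare against; your argument correctly supplies the missing details. One small wrinkle to be aware of (inherited from the paper's statement rather than your proof): the paper's definition of supermartingale quantifies over $i \in \Z_+$ and indexes the process from $P_1$, while the theorem statement begins at $P_0$; your argument implicitly uses the $i=0$ case $\E[P_1 \mid \emptyset] = \E[P_1] \le P_0$, which is the natural reading if $P_0$ is deterministic (as it is in the paper's application, where $P_1 = c(S_1) + \alpha\,\OPT(f_1)$ with $S_1 = \emptyset$), but it is worth flagging that this base step is not literally covered by the paper's definition as written.
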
 

\paragraph{Algorithm Analysis. } Henceforth, we consider the execution of \Cref{alg:bicriteria-iterative-random-deletion} on a fixed input instance $(f, c, \rho, \epsilon)$.  Let $\ell \in \Z_+$ be the number of iterations of the while-loop---we note that $\ell$ is a random variable with value at most $n$ since at every iteration of the while-loop, the size of the ground set decreases by at least $1$. Throughout the analysis, we will index the (random) variables at the $i^{th}$ iteration of the algorithm with the subscript $i$ for all $i \in [\ell]$. In particular, we let $S_i$ denote the set $S$ at the start of the $i^{th}$ iteration (so $S_1 := \emptyset$, and $S_{i+1}$ is defined by Step 2(c)), $f_i:2^{V_i}\rightarrow\R_{\geq0}$ denote the preprocessed function $f$ after step 2(a), and $u_i$ denote the sampled vertex $u$ after step 2(b) of the $i^{th}$ iteration of the algorithm.
For simplicity, we define $S_{j} := S$, and $f_{j}$ to be the empty-function for all $j \geq \ell$. The next lemma shows that the density of the function after deleting the set $S$ is at most $c_f(1+\epsilon)\rho$, i.e. $S$ is a feasible solution to \rhosdds{(c_f(1+\epsilon)\rho)} for the function $f$. The proof easily follows by considering any fixed execution of the algorithm and leveraging \Cref{lem:dense-decomposition}(1) while inducting on $\ell$. We omit details of the proof here for brevity.

\begin{lemma}[Approximate Feasibility]\label{lem:bicriteria-random-deletion:approximate-feasibility}
    % $\density(f|_{V - S}) \leq c_f(1+\epsilon)\rho$.
    $\lambda^*_{f|_{V - S}} \leq c_f(1+\epsilon)\rho$.
\end{lemma}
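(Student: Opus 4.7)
The plan is to prove the bound by induction on $\ell$, the number of while-loop iterations executed on the input instance, with the key lifting step being an invocation of Lemma \ref{lem:dense-decomposition}(1) applied with $\rho' = c_f(1+\epsilon)\rho$. The intuition is that each iteration does two things that are individually ``safe'' for feasibility: the preprocessing in Step 2(a) restricts to a set $R$ such that any feasible solution with respect to $f|_R$ at threshold $\rho'$ pulls back to a feasible solution with respect to $f$ at the same threshold, and Step 2(c) simply moves the sampled vertex $u$ from the ground set into $S$. Composing these over all iterations recovers $S$ as a feasible solution on the original $(f,V)$ at threshold $c_f(1+\epsilon)\rho$.

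For the base case $\ell=0$, the while-loop exits immediately, so $\lambda_f^* \le c_f(1+\epsilon)\rho$ already and $S = \emptyset$, giving $\lambda^*_{f|_{V-S}} = \lambda_f^* \le c_f(1+\epsilon)\rho$. For the inductive step, let $f_1 := f|_{R_1}$ be the function produced by Step 2(a) in the first iteration, so $R_1 \subseteq V$ is the dense-decomposition set at threshold $\rho' = c_f(1+\epsilon)\rho$, and let $u_1 \in R_1$ be the vertex sampled in Step 2(b). The remaining iterations of the algorithm are exactly an execution of \Cref{alg:bicriteria-iterative-random-deletion} on the instance $(f_1|_{R_1 - u_1}, c, \rho, \epsilon)$ with $\ell - 1$ loop iterations, and the returned set equals $S \setminus \{u_1\}$. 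By the inductive hypothesis, $\lambda^*_{f_1|_{R_1 - S}} \le c_f(1+\epsilon)\rho$, i.e., $S \setminus \{u_1\}$ is a feasible solution to $\rhosdds{\rho'}$ for $f_1|_{R_1 - u_1}$, and hence $S$ itself is a feasible solution to $\rhosdds{\rho'}$ for $f_1 = f|_{R_1}$.

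Finally, applying Lemma \ref{lem:dense-decomposition}(1) with threshold $\rho'$ to the function $f$ and restriction set $R_1$ lifts this to feasibility with respect to the original $f$: $S$ is a feasible solution to $\rhosdds{\rho'}$ for $f$, i.e., $\lambda^*_{f|_{V - S}} \le c_f(1+\epsilon)\rho$, as desired. I do not expect any genuine obstacle here, as the only nontrivial ingredient—that the dense-decomposition pruning at threshold $\rho'$ is solution-preserving at that same threshold—is exactly Lemma \ref{lem:dense-decomposition}(1); the induction merely iterates this lifting. Termination is immediate since $|V|$ strictly decreases in every iteration, so $\ell \le n$.
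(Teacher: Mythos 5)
Your proof is correct and is essentially the proof the paper has in mind: the paper states only that "the proof easily follows by considering any fixed execution of the algorithm and leveraging \Cref{lem:dense-decomposition}(1) while inducting on $\ell$," which is exactly the induction you carry out, with you supplying the omitted details.

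One small point of care worth flagging: when you treat the tail of the execution as "a fresh run of \Cref{alg:bicriteria-iterative-random-deletion} on $(f_1|_{R_1-u_1}, c, \rho, \epsilon)$," you are implicitly holding the while-loop threshold fixed at $\tau := c_f(1+\epsilon)\rho$ with $c_f$ computed from the \emph{original} $f$. A literal fresh invocation would recompute $c_{f'}$ for the restricted function $f'$, and since $c_{f'} \le c_f$ this could lower the threshold and change the number of iterations, so the tail of the original execution would not be identical to that fresh run. The clean fix is exactly what you are doing in spirit: state the induction for the algorithm parameterized by a fixed threshold $\tau > \rho$ (which is how the paper's prose describes it, fixing $\beta := c_f(1+\epsilon)$ up front), and then the inductive hypothesis and the application of Lemma~\ref{lem:dense-decomposition}(1) with $\rho' = \tau$ go through exactly as you wrote. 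Everything else — the base case, the observation that $S \subseteq R_1$ so the lemma applies, the identification $(f_1|_{R_1-u_1})|_{R_1 - S} = (f|_{R_1})|_{R_1-S}$, and the termination bound $\ell \le n$ — is correct.
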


The next lemma shows that the expected cost of the solution returned by the algorithm is at most $c_f(1+1/\epsilon)\rho$ times the cost of the optimal \rhosdds{\rho} of the function $f$. For any restriction $g$ of the function $f$, we use $\OPT(g)$ to denote the value of an optimal $\rhosdds{\rho}$ for $g$ with respect to the cost function $c$.

\begin{lemma}[Approximate Cost]\label{lem:bicriteria-random-deletion:expected-cost}
    $\E[c(S)] \leq c_f(1+1/\epsilon)\OPT(f)$.
\end{lemma}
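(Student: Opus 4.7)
The plan is to exhibit a supermartingale whose initial value is at most $c_f(1+1/\epsilon)\OPT(f)$ and whose terminal value lower bounds $c(S)$, and then to apply Doob's Optional-Stopping Theorem (Theorem~\ref{thm:doob}). Define
\[
\Phi_i \;:=\; c(S_i) + c_f\!\left(1+\tfrac{1}{\epsilon}\right)\OPT(f_i),
\]
extended past termination by setting $S_i := S$ and $\OPT(f_i) := 0$ for $i > \ell$. Then $\Phi_0 \leq c_f(1+1/\epsilon)\OPT(f)$ (using $S_1 = \emptyset$ and Lemma~\ref{lem:dense-decomposition}(2) to absorb the very first preprocessing step), while $\Phi_j \geq c(S)$ for $j$ at or past termination. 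Since each iteration strictly shrinks the ground set, the number of iterations $\ell$ is a stopping time bounded deterministically by $n$, so once $\{\Phi_i\}$ is shown to be a supermartingale, Theorem~\ref{thm:doob} gives $\E[c(S)] \leq \E[\Phi_\ell] \leq \Phi_0 \leq c_f(1+1/\epsilon)\OPT(f)$.

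The crux is the one-step drift bound $\E[\Phi_{i+1} \mid \mathcal{H}_i] \leq \Phi_i$, where $\mathcal{H}_i$ denotes the history up through the preprocessing step of iteration $i+1$ (so $f_{i+1}$ is measurable but $u_{i+1}$ is not). Fix an optimal \rhosdds{\rho} solution $X^\star \subseteq V_{i+1}$ for $f_{i+1}$ of cost $\OPT(f_{i+1})$. Monotonicity of $\lambda^\ast$ under ground-set restriction, together with the observation that $X^\star - v$ (if $v \in X^\star$) or $X^\star$ (if $v \notin X^\star$) remains feasible after deleting $v$ from the ground set, gives the pointwise estimate
\[
\OPT\!\bigl(f_{i+1}|_{V_{i+1}-v}\bigr) \;\leq\; \OPT(f_{i+1}) - c(v)\cdot \mathbf{1}[v \in X^\star]
\qquad \forall v \in V_{i+1},
\]
and Lemma~\ref{lem:dense-decomposition}(2) applied at the next iteration preserves this under the subsequent preprocessing, giving $\OPT(f_{i+2}) \leq \OPT(f_{i+1}|_{V_{i+1}-u_{i+1}})$. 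Taking expectations over the bang-per-buck sampling rule $\Pr(u_{i+1} = v) = f_{i+1}(v|V_{i+1}-v)/(c(v) W)$ yields $\E[c(u_{i+1}) \mid \mathcal{H}_i] = W^{-1}\sum_v f_{i+1}(v|V_{i+1}-v)$ and $\OPT(f_{i+1}) - \E[\OPT(f_{i+2}) \mid \mathcal{H}_i] \geq W^{-1}\sum_{v \in X^\star} f_{i+1}(v|V_{i+1}-v)$. Lemma~\ref{lem:dense-decomposition}(3) guarantees that every marginal of $f_{i+1}$ is at least $c_f(1+\epsilon)\rho$, so Lemma~\ref{lem:bicriteria-random-deletion-helper:main} relates the two sums by a factor of $c_f(1+1/\epsilon)$; rearranging then produces the drift inequality.

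The main obstacle is purely bookkeeping: each iteration modifies $f$ twice---the dense-decomposition preprocessing at Line 2(a) and the restriction to $V-u$ at Line 2(c)---and both modifications must be chained through the supermartingale and the filtration in the correct order. Preprocessing is handled by Lemma~\ref{lem:dense-decomposition} (parts (2) and (3) supply, respectively, the $\OPT$ estimate and the marginal lower bound required to invoke Lemma~\ref{lem:bicriteria-random-deletion-helper:main}), while the restriction step is handled by the feasibility observation above. No new ideas beyond Lemmas~\ref{lem:dense-decomposition} and~\ref{lem:bicriteria-random-deletion-helper:main} are needed; the supermartingale framework, combined with the bounded stopping time $\ell \leq n$, packages them into the claimed expected-cost bound.
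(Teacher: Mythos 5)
Your proposal is correct and follows essentially the same route as the paper: define the supermartingale $P_i = c(S_i) + c_f(1+1/\epsilon)\OPT(f_i)$, establish the one-step drift bound by fixing an optimal solution $X^\star$ for the current preprocessed function and chaining Lemma~\ref{lem:dense-decomposition}(2)--(3) with Lemma~\ref{lem:bicriteria-random-deletion-helper:main} under the bang-per-buck sampling rule, and then invoke Doob's Optional-Stopping Theorem with the deterministic bound $\ell \le n$. The only differences from the paper's write-up are cosmetic (your index is shifted by one and you make the $\mathbf{1}[v\in X^\star]$ bookkeeping explicit rather than splitting the expectation into an inequality chain), so this is the same argument.
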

\begin{proof}
    For ease of exposition, we will use $\alpha:= c_f(1+1/\epsilon)$.
    We consider the sequence of random variables $P_1, P_2, \ldots$, where $P_i := c(S_i) + \alpha\OPT(f_i)$ for all $i \in \Z_+$. Our strategy will be to first show that this sequence of random variables is a supermartingale, and then apply Doob's Optional-Stopping Theorem with stopping time $n$ to bound the expected cost of the set returned by the algorithm (note that $\ell \leq n$ since with each iteration of the while-loop, the size of the ground set decreases by at least $1$). Before showing that the sequence is a supermartingale, we first show that the expected cost of a vertex chosen in step 2(b) of an iteration of the algorithm is at most an $\alpha$-fraction of the expected decrease in the optimum value during the iteration.

    \begin{claim}\label{claim:supermartingale-helper} 
    $\E[c(u_i)|u_1, u_2, \ldots, u_{i-1}] \leq \alpha\E\left[\OPT(f_i) - \OPT(f_{i+1}) |u_1, u_2, \ldots, u_{i-1}\right]$ for all $i \in [\ell]$.
\end{claim}
\begin{proof}
Let $X_i$ be an optimal \rhosdds{\rho} for $f_i$. 
We have the following:
\begin{align*}
    \E[c(u_i) | u_1, u_2, \ldots, u_{i-1}] 
    &= \sum_{v \in V_i} \Pr(u_i = v)\cdot c(v) \\
    &= \frac{1}{W}\sum_{v \in V_i}f_i(v|V_i)\\
    &\leq \frac{\alpha}{W}\sum_{v \in X_i} f_i(v|V_i)\\
    &= \alpha\sum_{v \in X_i} \Pr(u_i = v)\cdot c(v),
 \end{align*}
 where the first inequality is by \Cref{lem:bicriteria-random-deletion-helper:main} and the fact that $f(v|V_i) \geq c_f(1+\epsilon)\rho$ by our preprocessing (Step 2(a)) and \Cref{lem:dense-decomposition}(3). 
 We now show that because $X_i$ is an optimal solution for $f_i$, the final expression in the above can be upper bounded by $\alpha\E[\OPT(f_i) - \OPT(f_{i+1})]$, thereby completing the proof of the claim. 
 This can be seen as follows:
\begin{align*}
    \E[\OPT(f_i) - \OPT(f_{i+1})] & = \sum_{v \in V_i}\E[\OPT(f_i) - \OPT(f_{i+1}) | u_i = v]\cdot \Pr(u_i = v)\\
    & \geq \sum_{v \in X_i} \E[\OPT(f_i) - \OPT(f_{i+1}) | u_i = v]\cdot \Pr(u_i = v)\\
    &\geq \sum_{v \in X_i} \E[\OPT(f_i) - \OPT(f_{i}|_{V_i - v}) | u_i = v]\cdot \Pr(u_i = v)\\
    & \geq  \sum_{v \in X_i} \E[c(X_i) - c(X_i - v) | u_i = v]\cdot \Pr(u_i = v)\\
    & = \sum_{v \in X_i}c(v)\cdot \Pr(u_i = v).
\end{align*}
Here, the second inequality is by Step 2(b) of \Cref{alg:bicriteria-iterative-random-deletion} which says that the function $f_{i+1}$ is defined to be $ \textsc{DenseDecompositionPreprocess}(f_i|_{V_i - v}, c_f(1+\epsilon)\rho)$ and \Cref{lem:dense-decomposition}(2). The third inequality is because $X_i$ is an optimal solution for $f_i$ and $X_i - v$ is a feasible solution for $f_i|_{V_i - v}$.
\end{proof}
    
    We now show that the sequence $P_1, P_2, \ldots$ is a supermartingale w.r.t. the sequence of random variables $u_1, u_2, \ldots$ chosen by the algorithm.
    \begin{claim}\label{claim:supermartingale:main}
        The sequence of random variables $P_1, P_2, \ldots$ is a supermartingale  w.r.t. the sequence of random variables $u_1, u_2, \ldots$.
    \end{claim}
    \begin{proof}
        Let $i\in \Z_+$ be arbitrary. We note that $P_i$ has finite expectation and also is fully determined by the subsequence $u_1, \ldots, u_i$. Thus, our goal is to show that $\E[P_{i+1} | u_1, \ldots, u_i] \leq P_i$. This is equivalent to showing $\E[P_{i+1} - P_i | u_1, \ldots, u_{i}] \leq 0$. We note that this inequality indeed holds because
        $$\E[P_{i+1} - P_i | u_1, \ldots, u_{i}] = \E\left[c(u_{i+1}) - \alpha\left(\OPT(f_i) - \OPT(f_{i+1})\right) |  u_1, \ldots, u_i\right] \leq 0,$$
        where the inequality is by Claim \ref{claim:supermartingale-helper}.
    \end{proof}

    By Claim \ref{claim:supermartingale:main}, the sequence $P_1, P_2, \ldots$ is a supermartingale w.r.t. the sequence $u_1, u_2, \ldots$ of random variables. Consider the stopping time $\ell$. By \Cref{thm:doob}, we have that $\E[P_\ell] \leq \E[P_1]$. The following then completes the proof of the lemma:
    $$\E[c(S)] = \E[c(S_{\ell}) + \alpha\OPT(f_{\ell})] = \E[P_{\ell}] \leq \E[P_1] = c(S_1) + \alpha\OPT(f_1) \leq \alpha\OPT(f).$$
    Here, the final inequality follows by observing that $f_1 = \textsc{DenseDecompositionPreprocess}(f, c_f(1+\epsilon)\rho)$ and applying \Cref{lem:dense-decomposition}(2).
\end{proof} 
\section{\submodcover and \supmoddensitydeletionset}\label{sec:submodcover-reductions}

In this section, we prove Theorems \ref{thm:sddsToSubmodCover} and \ref{thm:SubmodCovertosdds}. 

\thmsddsToSubmodCover*
\begin{proof}
% We construct the function $h$ in two steps.
For simplicity, we define an intermediate function $g:2^V\rightarrow\R_{\geq 0}$, and use it to define the function $h:2^V\rightarrow\R_{\geq 0}$ of interest. The functions $g$ and $h$ are as follows: for every $X \subseteq V$,
\begin{align*}
    g(X) &:= \max\{f(Z) - \rho|Z| : Z \subseteq X\}, \text{ and }\\
    h(X) & := g(V) - g(V - X).
\end{align*}

We note that the function $h$ is normalized, non-decreasing, and submodular. If $\rho$ is an integer, then $h$ is integer-valued. Moreover, we  can answer evaluation queries for $h$ using polynomial many evaluation queries to $f$ (via supermodular maximization). 
We prove properties (2) and (3) of the theorem below.
\begin{enumerate}
    \item[(2)] Let $F\subseteq V$. We have the following sequence of equivalences.
    \begin{align*}
        \lambda^*_{f_{V - F}} \leq \rho \Leftrightarrow\  & \max_{S\subseteq V - F}\left\{\frac{f(S)}{|S|}\right\} \leq \rho\\
        %\Leftrightarrow\ & f|_{V - F}(S^*) \leq |S^*|\\
        \Leftrightarrow\ & g(V - F) \leq 0 \\
        \Leftrightarrow\ & g(V) - g(V - F) - g(V) + g(\emptyset) \geq 0\\
    \Leftrightarrow\ & h(F) - h(V) \geq 0
    \end{align*}
    Here, the second equivalence can be seen by the following. 
    For the forward direction, we suppose that $ \max\left\{\frac{f|_{V - F}(S)}{|S|} : S\subseteq V - F\right\} \leq \rho$. By way of contradiction, suppose that $g(V - F) > 0$. By definition of the function $g$, there exists a set $Z^* \subseteq V-F$ such that $g(V - F) = f(Z^*) - \rho|Z^*|$. Thus, $f(Z^*) - \rho|Z^*| > 0$. Equivalently, we have that $f(Z^*)/|Z^*| > \rho$, a contradiction to our hypothesis. Here we note that $Z^* \not = \emptyset$ as otherwise we would have that $f(Z^*) - \rho|Z^*| = 0$ since our function $f$ is normalized, contradicting our choice of $Z^*$. For the reverse direction, suppose that $g(V - F) \leq 0$. By way of contradiction, suppose that there exists a non-empty set $S^* \subseteq V - F$ such that $f(S^*)/|S^*| > \rho$. Then, we equivalently have that $f(S^*) - \rho|S^*| > 0$, a contradiction.
    \item[(3)] Let $v \in V$. We have the following:
    \begin{align*}
        h(v)& = g(V) - g(V - v)\\
        &= \max\left\{f(Z) - \rho|Z|: Z \subseteq V \right\} - \max \left\{f(Z') - \rho|Z'| :  Z' \subseteq V - v\right\} \\
        &= \max \left\{0, \max\left\{f(Z) - \rho|Z|: v \in Z \subseteq V \right\} - \max \left\{f(Z') - \rho|Z'| :  Z' \subseteq V - v\right\}\right\} \\
        & \leq \max\left\{0 , \max\left\{(f(Z) - \rho|Z|) - (f(Z') - \rho|Z'|) : v \in Z\subseteq V \text{ and } Z' \subseteq V - v\right\}\right\}\\
        & \leq \max\left\{0 , \max\left\{(f(Z) - \rho|Z|) - (f(Z-v) - \rho|Z-v|) : v \in Z\subseteq V\right\}\right\} \\
        & = \max\left\{0 , \max\left\{(f(Z) - f(Z-v) : v \in Z\subseteq V \right\} - \rho\right\}\\
        & \leq \max\left\{0, f(V) - f(V - v) - \rho\right\},
    \end{align*}
    where the final inequality is because $f(V - v) + f(Z) \leq f(V) + f(Z - v)$ for all $Z \subseteq V$ such that $v \in Z$ by supermodularity of the function $f$.
\end{enumerate}
\end{proof}

\thmSubmodCovertosdds*
\begin{proof}
    We consider the function $f:2^V\rightarrow\Z_{\geq 0}$ defined as follows: 
    for every $X \subseteq V$,
$$ f(X) := h(V) - h(V - X) + |X|.$$
We note that the function $f$ is normalized, non-decreasing, integer-valued and supermodular. Moreover, we  can answer evaluation queries for the function $f$ using two queries to the evaluation oracle for $h$. 
We note that property (1) of the theorem can be observed by following the steps of the proof of \Cref{thm:sddsToSubmodCover}(2) in reverse order, and so we omit the formal details here for brevity. Property (2) of the theorem can be observed as follows: for $v \in V$, we have the following:
\begin{align*}
f(v|V - v) = f(V) - f(V - v) &= (h(V) - h(\emptyset) + |V|) - (h(V) - h(v) + |V - v|)\\
&= h(v) + 1,
\end{align*}
where the final equality is because $h$ is normalized.
\end{proof}
\section{Conclusion}\label{sec:conclusion}
In this work, we considered several interrelated density deletion problems motivated by the question of understanding the robustness of densest subgraph. We showed tight logarithmic approximations for these problems. We showed inapproximability of graph density deletion by reduction from set cover and approximation algorithms by exhibiting the equivalence of supermodular density deletion and submodular cover. Motivated by our hardness results, we designed bicriteria approximation. Our bicriteria approximation for graph density deletion is LP-based and that for supermodular density deletion is randomized, combinatorial, and relies on the notion of dense decomposition of supermodular functions. 
We mention two open questions raised by our work. Firstly, we note that our bicriteria approximation for supermodular density deletion depends on the parameter $c_f$ related to the input supermodular function (see Theorem \ref{thm:bicriteria-random-deletion}). Is it possible to design a bicriteria approximation without the dependence on the parameter $c_f$? Secondly, we note that our hardness reduction shows that $\rho$-\dds is $\Omega(\log{n})$-hard for every fixed constant integer $\rho\ge 2$. We were able to adapt our reduction to conclude that it is $\Omega(\log{n})$-hard for every fixed constant $\rho\ge 3$ (not necessarily integers). Is it $\Omega(\log{n})$-hard for every fixed constant $\rho>1$?

\iffalse
\begin{remark}
    Since we have a bicriteria approximation algorithm for \densitydeletionset by \Cref{thm:orientation-LP-bicriteria}, it is natural to wonder whether the bicriteria approximation would imply anything interesting for \setcover via the reduction of  \Cref{thm:DDS-logn-hard}. Unfortunately, our reduction is brittle from the viewpoint of bicriteria approximations---the densest subgraph of the \densitydeletionset instance constructed via the reduction is already very close to the target density in the reduction. Consequently, the empty set would be a valid bicriteria approximate solution which gives us no information about the set cover instance.
\end{remark}
\fi
%\section{Acknowledgements}

\paragraph{Acknowledgments.}
 Shubhang Kulkarni would like to thank Kishen Gowda for engaging in preliminary discussions about approximations for feedback vertex set and pseudoforest deletion set. These turned out to be a starting point for us in realizing the connections to submodular cover implicitly known in literature. We thank an anonymous reviewer for pointing out an error in our proof of a previous version of  \Cref{thm:DDS-logn-hard}.

\bibliographystyle{abbrv}
\bibliography{references}

\appendix
\section{Reductions Between \sdds, \mfvs, and \dds}\label{appendix:sec:reductions}

In this section, we show reductions between \sdds, \mfvs and \dds. Throughout the section, we use $b_{G}(Z) := \cup_{u \in Z}\delta_G(u)$ to denote the \emph{edge-coverage} of  a set of vertices $Z \subseteq V$ in a graph $G = (V, E)$. Furthermore, for a matroid $\calM = (E, \calI)$, we use $\matroidrank_{\calM}$ to denote its rank function, and $\calM^*$ to denote the dual matroid. We first show that \mfvs is a special case of \sdds.
\begin{theorem}\label{thm:matroidfvs-to-supmodDD}
    Let $G = (V, E)$ be a graph and $\calM = (E, \calI)$ be a matroid. Then, there exists a normalized, non-negative, integer-valued, supermodular function $f:2^V\rightarrow\Z_{\geq 0}$ such that for a subset $F\subseteq V$ of vertices, we have that $E[V - F] \in \calI$ if and only if $\lambda^*_{f|_{V - F}}\leq 1$.
\end{theorem}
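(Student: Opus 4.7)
The plan is to define
\[
    f(S) := |S| + \mu_{\calM}(E[S]) \qquad \text{for every } S \subseteq V,
\]
where $\mu_{\calM}(X) := |X| - \matroidrank_{\calM}(X)$ denotes the \emph{nullity} of $X \subseteq E$, and then to verify the four required properties. Normalization ($f(\emptyset) = 0$) and integrality are immediate, and since $\matroidrank_{\calM}(X) \le |X|$ we have $\mu_{\calM}(X) \ge 0$, which gives $f(S) \ge |S| \ge 0$.

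For the claimed equivalence, I would observe that $\lambda^*_{f|_{V - F}} \le 1$ is precisely the statement that $f(S) \le |S|$ for every $S \subseteq V - F$, which rearranges to $\mu_{\calM}(E[S]) \le 0$, i.e., $E[S] \in \calI$, for every such $S$. Since $E[S] \subseteq E[V - F]$ for every $S \subseteq V - F$, and since subsets of independent sets are independent, this is equivalent to $E[V - F] \in \calI$.

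The main obstacle is supermodularity of $f$. Since $|S|$ is modular, this reduces to showing that the map $S \mapsto \mu_{\calM}(E[S])$ is supermodular on subsets of $V$. The approach is to use two properties of $\mu_{\calM}$ as a function on subsets of $E$: (i) it is supermodular, being the sum of the modular function $|\cdot|$ and the supermodular function $-\matroidrank_{\calM}$; and (ii) it is monotone, since adjoining an element to a set increases its size by $1$ and its rank by at most $1$. Combining these with the elementary set-theoretic identities $E[A] \cap E[B] = E[A \cap B]$ and $E[A] \cup E[B] \subseteq E[A \cup B]$, I would chain
\[
    \mu_{\calM}(E[A]) + \mu_{\calM}(E[B]) \le \mu_{\calM}(E[A] \cup E[B]) + \mu_{\calM}(E[A \cap B]) \le \mu_{\calM}(E[A \cup B]) + \mu_{\calM}(E[A \cap B]),
\]
where the first inequality uses supermodularity of $\mu_{\calM}$ on $E$ and the second uses its monotonicity.

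The reason this obstacle is genuine and not merely cosmetic is that the two summands $|E[S]|$ and $-\matroidrank_{\calM}(E[S])$ cannot be handled separately. While $|E[S]|$ is itself supermodular in $S$, the function $\matroidrank_{\calM}(E[S])$ is in general \emph{not} submodular in $S$: for instance, on the path graph with vertex set $\{1,2,3,4\}$ and edges $\{12,23,34\}$ equipped with its graphic matroid, the sets $E[A]$, $E[B]$, and $E[A \cap B]$ are all empty for $A = \{1,3\}$ and $B = \{2,4\}$, yet $E[A \cup B]$ has rank $3$. The two pieces must therefore be bundled into the nullity function, which is the key structural insight of the construction.
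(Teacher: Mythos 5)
Your proof is correct, and although it ultimately produces the same function as the paper's, the route you take is genuinely different and more self-contained. The paper defines an auxiliary submodular function $h(S) := \matroidrank_{\calM^*}(b_G(S))$ (dual-matroid rank composed with edge-coverage), invokes a composition lemma to argue $h$ is monotone submodular, and then feeds $h$ into the generic transformation $f(S) := h(V) - h(V - S) + |S|$ from Theorem~\ref{thm:SubmodCovertosdds}. Unfolding that transformation via the dual rank identity $\matroidrank_{\calM^*}(E') = |E'| - \matroidrank_{\calM}(E) + \matroidrank_{\calM}(E - E')$ and the observation that $b_G(V - S) = E \setminus E[S]$ yields exactly your formula $f(S) = |S| + \mu_{\calM}(E[S])$. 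So the two constructions coincide, but you arrive at it directly, and you prove supermodularity from scratch by combining supermodularity and monotonicity of the nullity function with the set identities $E[A] \cap E[B] = E[A \cap B]$ and $E[A] \cup E[B] \subseteq E[A \cup B]$. What your approach buys is transparency: no detour through the dual matroid, no appeal to an external composition lemma, and an explicit diagnosis (your path-graph counterexample) of why $\matroidrank_{\calM}(E[S])$ alone fails to be submodular in $S$, which is exactly the structural point that makes the nullity bundling necessary. What the paper's route buys is that the intermediate function $h$ is precisely the submodular function used in Fujito's reduction of \mfvs to \submodcover, so the paper's presentation makes the pipeline \mfvs $\to$ \submodcover $\to$ \sdds explicit and reuses Theorem~\ref{thm:SubmodCovertosdds} as a black box.
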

\begin{proof}
We prove the theorem in two steps. For the first step, we construct an intermediate function $h:2^V\rightarrow\Z_{\geq 0}$ defined as follows: $h(S) := \matroidrank_{\calM^*}(b_G(S))$ for all $S \subseteq V$. The following claim, implicit in \cite{Fujito-matroid-fvs}, shows that this construction reduces checking independence in the matroid $\calM$ to Submodular Cover.
\begin{claim}[\cite{Fujito-matroid-fvs}]
    The function $h$ is integer-valued, normalized, non-decreasing and submodular. Moreover, for all $F \subseteq V$, we have that $E[V - F] \in \calI$ if and only if $h(F) = h(V)$.
\end{claim}
\begin{proof}
    We note that since the functions $b_G$ and $\matroidrank_{\calM^*}$ are integer-valued, normalized, non-decreasing and submodular, these properties also hold for the function $h$. Furthermore, the second part of the claim from the definition of $h$ and the following two properties of matroids: $\matroidrank_{\calM^*}(E') = |E'| - \matroidrank_{\calM}(E) + \matroidrank_{\calM}(E-E')$ for all $E' \subseteq E$, and $E[V - F] \in \calI$ if and only $\matroidrank_{\calM}(E[V - F]) = |E[V - F]|$.
\end{proof}

For the second step, we use the above intermediate function $h$ to obtain the required normalized supermodular function $f:2^V\rightarrow\R_{\geq 0}$. We note that this function $f$ can be constructed by applying the same construction as in the proof of \Cref{thm:SubmodCovertosdds} and observing the additional two properties of non-negativity and non-decreasing monotonicity. Delving into the proof of \Cref{thm:SubmodCovertosdds}, we observe that the function $f$ is explicitly defined as follows: $f(S) := h(V) - h(V - S) + |S|$ for all $S \subseteq V$. We omit repeating the details here for brevity.
\end{proof}

Next, we show that for all integer $\rho \in \Z_+$, \rhodensitydeletionset is a special case of \mfvs. For this, we will need the following background. We recall that for an integer $\rho\in\Z_+$, the $\rho$-fold union of a matroid $\calM = (E, \calI_\rho)$ is another matroid $\calM_\rho = (E, \calI_{\rho})$, where a subset of edges $F\subseteq E$ is in $\calI_{\rho}$ if $F$ can be partitioned into $\rho$ parts such that each part is in $\calI$, i.e., $F := \uplus_{i \in [\rho]} F^{(i)}$ such that $F^{(i)} \in \calI$ for every $i\in [\rho]$. We refer the reader to Welsh's book on matroid theory \cite{Welsh-book} for additional details. 
% We will also rely on a characterization of integral densities via orientations. In particular, we will use the following proposition.
% \begin{proposition}\label{thm:orientation-characterization}
% Let $G=(V, E)$ be a graph. Then, we have the following:
% \begin{enumerate}
%     \item let $\rho \in \Z_{\geq 0}$ be an integeral; then, $\lambda^*_G\le \rho$ if and only if there exists an orientation $\vec{G} = (V, \vec{E})$ of the graph $G$ such that $\indegree_{\vec{G}}(u) \leq \rho$ for every $u\in V$, and
%     \item $G$ is a pseudoforest if and only if $\lambda^*_G \leq 1$.
% \end{enumerate}
% \end{proposition}
% The first part of the proposition is the integral analog of \Cref{thm:Charikar-fraction-orientation} and states that a graph $G$ has density at most $\rho$ iff the edges of $G$ can be oriented such that the in-degree at every vertex is at most $\rho$.  This is implied via the dual of Charikar's \cite{charikar_greedy_2000} and also by a result of Hakimi \cite{Frank-book}.  Then, the second part of the proposition, along with the first part, implies that a graph is a pseudoforest iff its edges can be oriented such that the maximum indegree is at most $1$ (see \cite{chandrasekaran2024polyhedralaspectsfeedbackvertex} for additional details).
We will also rely on a well-known characterization of \emph{pseudoforests} using density---we recall that a graph is a pseudoforest if every component has at most one cycle. The following proposition states that pseudoforests are the graphs that have density at most $1$.
\begin{proposition}\cite{chandrasekaran2024polyhedralaspectsfeedbackvertex}\label{prop:psuedoforest-characterization}
    A graph $G$ is a pseudoforest if and only if $\lambda^*_G \leq 1$.
\end{proposition}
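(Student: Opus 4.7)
The plan is to prove both directions using the elementary fact that a connected graph on $n$ vertices is a pseudoforest (has at most one cycle) if and only if it has at most $n$ edges. This equivalence follows from the observation that a connected graph on $n$ vertices with $n-1$ edges is a tree (cyclomatic number $0$), with $n$ edges has cyclomatic number $1$ and hence exactly one cycle, and with $\geq n+1$ edges has cyclomatic number $\geq 2$, which forces at least two distinct cycles in its cycle space (and hence as subgraphs).

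For the forward direction, I would show that if $G$ is a pseudoforest then every induced subgraph $G[S]$ is also a pseudoforest. This follows because each connected component of $G[S]$ is contained in some connected component $C$ of $G$, and cycles in $G[S]$ are cycles in $G$; since $C$ contains at most one cycle, so does the component of $G[S]$ inside it. Applying the elementary fact component-by-component gives $|E(S)| \leq |S|$ for every $S \subseteq V$, hence $\lambda_G^* \leq 1$.

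For the reverse direction, I would argue the contrapositive: suppose $G$ is not a pseudoforest, so some connected component $C$ contains at least two distinct cycles. Then $C$ has cyclomatic number at least $2$, which means $|E(C)| \geq |V(C)| + 1$. Taking $S = V(C)$ yields $|E(S)|/|S| > 1$, so $\lambda_G^* > 1$.

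The main (minor) obstacle is being precise about the equivalence between "having at most one cycle as a subgraph" and the bound $|E(C)| \leq |V(C)|$ for a connected component $C$. I would justify this by invoking the standard fact that for a connected graph, the cyclomatic number $|E| - |V| + 1$ equals the dimension of the cycle space, and a connected graph has at most one cycle exactly when this dimension is at most $1$. With that in hand, both directions reduce to a one-line counting argument, and the proposition follows.
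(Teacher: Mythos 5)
Your argument is correct. The paper does not prove this proposition itself---it cites it to an earlier work \cite{chandrasekaran2024polyhedralaspectsfeedbackvertex}---so there is no in-paper proof to compare against, but your self-contained proof is sound. The counting argument is the standard one: you correctly note that induced subgraphs of pseudoforests are pseudoforests (since cycles in $G[S]$ persist as cycles in $G$, component by component), and the cyclomatic-number bookkeeping on each side is exactly right, including the observation that cyclomatic number $\ge 2$ yields two \emph{distinct} simple cycles via two fundamental cycles relative to a spanning tree.

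One small stylistic alternative worth mentioning, since it connects to machinery the paper already develops: \Cref{thm:orientation-characterization}(2) gives $\lambda^*_G \le 1$ iff $G$ admits an orientation with every indegree at most $1$. From there, the ``only if'' direction is immediate (each vertex is the head of at most one edge, so each component has at most as many edges as vertices), and the ``if'' direction can be proved constructively: in each pseudoforest component, orient the unique cycle (if any) cyclically so each cycle vertex receives indegree $1$, then orient the remaining tree edges away from the cycle (or from an arbitrary root if there is no cycle). This avoids cyclomatic-number bookkeeping at the cost of invoking the orientation characterization, but both routes are equally elementary. Your proof is complete as written.
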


We now show the connection between \rhodensitydeletionset and \mfvs when $\rho \in \Z_+$.

\begin{theorem}\label{thm:dds-to-matroidfvs}
    Let $G = (V, E)$ be a graph and $\rho \in \Z_+$ be an integer. 
    Let $\calM := (E, \calI)$ denote the pseudoforest matroid on the graph $G = (V, E)$, where a set of edges $E'\subseteq E$ is independent if every component in the subgraph $G' = (V, E')$ has at most one cycle. Let $\calM_{\rho} := (E, \calI_\rho)$ be the $\rho$-fold union of the pseudoforest matroid. 
    Then, for a subset $F \subseteq V$, we have that $E[V - F] \in \calI_\rho$ if and only if $\lambda^*_{G - F} \leq \rho$.
\end{theorem}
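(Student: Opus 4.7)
The plan is to prove both directions of the biconditional using the orientation characterization of density from \Cref{thm:orientation-characterization}(2) together with the pseudoforest characterization of \Cref{prop:psuedoforest-characterization}. The underlying translation is: density at most $\rho$ corresponds to an orientation with in-degree at most $\rho$, while a graph is a pseudoforest iff it admits an orientation with in-degree at most $1$ at every vertex.

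For the forward direction, I assume $E[V-F] \in \calI_\rho$. By definition of the $\rho$-fold union, there exists a partition $E[V-F] = E_1 \uplus E_2 \uplus \cdots \uplus E_\rho$ such that each $(V-F, E_i)$ is a pseudoforest. By \Cref{prop:psuedoforest-characterization}, each such subgraph has density at most $1$, so by \Cref{thm:orientation-characterization}(2), there exists an orientation of $E_i$ in which every vertex has in-degree at most $1$. Taking the union of these $\rho$ orientations yields an orientation of $E[V-F]$ with in-degree at most $\rho$ at every vertex, which by \Cref{thm:orientation-characterization}(2) again gives $\lambda^*_{G-F}\le\rho$.

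For the reverse direction, I assume $\lambda^*_{G-F}\le \rho$. By \Cref{thm:orientation-characterization}(2), there is an orientation $\vec{H}$ of the induced subgraph $(V-F, E[V-F])$ in which every vertex has in-degree at most $\rho$. I now partition $E[V-F]$ into $\rho$ parts as follows: at each vertex $u \in V-F$, arbitrarily label its (at most $\rho$) incoming edges in $\vec{H}$ with distinct labels from $[\rho]$, and let $E_i$ be the set of edges that receive label $i$. By construction, in the oriented subgraph $(V-F, E_i)$ (with the orientation inherited from $\vec{H}$), every vertex has in-degree at most $1$. Hence, by \Cref{thm:orientation-characterization}(2), the graph $(V-F, E_i)$ has density at most $1$, and then by \Cref{prop:psuedoforest-characterization}, it is a pseudoforest, i.e., $E_i \in \calI$. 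Thus $E[V-F] = E_1 \uplus \cdots \uplus E_\rho$ exhibits the membership $E[V-F] \in \calI_\rho$.

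The main obstacle I would expect is the argument in the reverse direction that labeling incoming edges at each vertex gives a valid partition into pseudoforests; but as sketched, the key observation is simply that giving each vertex at most one incoming edge in each part $E_i$ exactly means each part has density at most $1$ (and thus is a pseudoforest via \Cref{prop:psuedoforest-characterization}). No intricate case analysis is required, and both directions reduce entirely to invoking \Cref{thm:orientation-characterization}(2) and \Cref{prop:psuedoforest-characterization} after the natural combinatorial manipulation of orientations.
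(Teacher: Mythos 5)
Your proof is correct and follows essentially the same approach as the paper: both directions hinge on \Cref{thm:orientation-characterization}(2) and \Cref{prop:psuedoforest-characterization}, reducing the claim to splitting an in-degree-$\rho$ orientation into $\rho$ in-degree-$1$ orientations and vice versa. Your per-vertex labeling of incoming edges is just a cleaner phrasing of the paper's iterative picking procedure, so there is no substantive difference.
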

\begin{proof}
%Let $\calM := (E, \calI)$ denote the pseudoforest matroid on the graph $G = (V, E)$, where a set of edges $E'\subseteq E$ is independent if every component in the subgraph $G' = (V, E')$ has at most one cycle. Let $\calM_{\rho} := (E, \calI_\rho)$ be the $\rho$-fold union of the pseudoforest matroid. We will show that $\calM_{\rho}$ is the required matroid. 
Let $F \subseteq V$ be an arbitrary subset of vertices. %We show that $\lambda^*_{G - F} \leq \rho$ if and only if $E[V - F] \in \calI_\rho$ by proving both directions of the correspondence. 
For notational convenience, we denote $G-F$ by $G_F = (V_F, E_F)$.

For the reverse direction, suppose that $\lambda^*_{G_F} \leq \rho$. Consequently, by \Cref{thm:orientation-characterization}(2), there exists an orientation $\vec{G}_F$ of the graph $G_F$ such that $\indegree_{\vec{G}_F}(u) \leq \rho$ for all $u \in V_F$. Using this orientation, we partition the edges $E_F$ into $\rho$ parts $E_F^{(1)}, \ldots, E_F^{(\rho)}$ such that for the graph $G_F^{(i)} := (V_F, E_F^{(i)})$ where $i \in [\rho]$, we have that $\indegree_{G_F^{(i)}}(u) \leq 1$ for all $u \in V$. We note that this partitioning can be obtained by the following simple iterative procedure for $\rho$ iterations: during the $i^{th}$ iteration, we construct the graph $G_F^{(i)}$ by letting every vertex pick an unpicked edge oriented into the vertex (if there is such an edge).  Then, by \Cref{thm:orientation-characterization}(2) and \Cref{prop:psuedoforest-characterization}, the graph $G_F^{(i)}$ is a pseudoforest, and so $E_F^{(i)} \in \calI$ for all $i \in [\rho]$ by definition of the pseudoforest matroid $\calM$. Consequently, we have that $E_F \in \calI_\rho$ by definition of the $\rho$-fold union matroid.

For the forward direction, suppose that $E_F \in \calI_\rho$. We will perform the argument of the forward direction in reverse. By definition of the $\rho$-fold union matroid, we can partition the edges of $E_F$ into $\rho$ parts $E_F^{(1)}, \ldots, E_F^{(\rho)}$ such that the graph $G_F^{(i)} := (V_F, E_F^{(i)})$ is a pseudoforest for each $i \in [\rho]$. Then, by \Cref{thm:orientation-characterization}(2) and \Cref{prop:psuedoforest-characterization}, we can independently obtain an orientation $\vec{G}_F^{(i)}$ such that $\indegree_{\vec{G}_F^{(i)}}(u) \leq 1$ for all $i \in [\rho]$. Composing all these orientation together, we obtain an orientation $\vec{G}_F$ of the graph $G_F$ such that $\indegree_{\vec{G}_F}(u) \leq \rho$. Then, by \Cref{thm:orientation-characterization}(1), we have that $\lambda^*_{G_F} \leq \rho$.
\end{proof}

\begin{remark}
    There are several ways to prove \Cref{thm:dds-to-matroidfvs}. For example, one can directly show that $\calM = (E, \calI := \{E'\subseteq E : \lambda^*_{(V, E')} \leq \rho\})$ is a matroid by showing that it satisfies the matroid axioms or by
    applying known results in submodularity and matroid theory (see Corollary 8.1 of \cite{Welsh-book}).
\end{remark}

\end{document}